\newcommand{\bel}[1]{\begin{equation}\label{#1}}
\newcommand{\be}{\begin{equation}}
\newcommand{\ba}{\begin{eqnarray}}
\newcommand{\ea}{\end{eqnarray}}
\newcommand{\qe}{\end{equation}}
\newcommand{\Hmm}[1]{\leavevmode{\marginpar{\tiny%
$\hbox to 0mm{\hspace*{-0.5mm}$\leftarrow$\hss}%
\vcenter{\vrule depth 0.1mm height 0.1mm width \the\marginparwidth}%
\hbox to
0mm{\hss$\rightarrow$\hspace*{-0.5mm}}$\\\relax\raggedright #1}}}
\newtheorem{theorem}{Theorem}[section]
\newtheorem{lemma}[theorem]{Lemma}
\newtheorem{corollary}[theorem]{Corollary}
\newtheorem{definition}[theorem]{Definition}
\newtheorem{remark}[theorem]{Remark}
\newtheorem{prop}[theorem]{Proposition}
\newcommand{\tm}{\begin{theorem}}
\newcommand{\tmd}{\end{theorem}}
\newcommand{\co}{\begin{corollary}}
\newcommand{\cod}{\end{corollary}}
\newcommand{\prp}{\begin{prop}}
\newcommand{\prpd}{\end{prop}}
\begin{document}

\title[topological solutions to the Chern-Simons model]{The existence of topological solutions to the Chern-Simons model on lattice graphs}

\author{Bobo Hua}
\address{Bobo Hua: School of Mathematical Sciences, LMNS, Fudan University, Shanghai 200433, China; Shanghai Center for Mathematical Sciences, Fudan University, Shanghai 200433, China}
\email{bobohua@fudan.edu.cn}

\author{Genggeng Huang}
\address{Genggeng Huang: School of Mathematical Sciences, LMNS, Fudan University, Shanghai 200433, China}
\email{genggenghuang@fudan.edu.cn}

\author{Jiaxuan Wang}
\address{Jiaxuan Wang: School of Mathematical Sciences, Fudan University, Shanghai 200433, China}
\email{jiaxuanwang21@m.fudan.edu.cn}

\begin{abstract}

We prove the existence of topological solutions to the self-dual Chern-Simons model and the Abelian Higgs system on the lattice graphs $\mathbb{Z}^n$ for $n\geqslant2$. This extends the results in \cite{huang2020existence} from finite graphs to lattice graphs.
\end{abstract}
\maketitle

Mathematics Subject Classification 2010: 35A01, 35A16, 35R02, 35J91.

\par
\maketitle

\bigskip

\section{Introduction}

Various vortex problems have been extensively studied in recent decades, which play important roles in quantum physics, solid state physics and so on. The existence of topological and non-topological solutions in these models has been rigorously proven in mathematics. In $\mathbb{R}^2$, we consider the self-dual Chern-Simons vortex equation
\begin{equation}
\begin{aligned}
\Delta u=\lambda e^u(e^u-1)+4\pi\sum_{j=1}^Mn_j\delta_{p_j}
\end{aligned}
\end{equation}
and the Abelian Higgs equation
\begin{equation}
\begin{aligned}
\Delta u=\lambda (e^u-1)+4\pi\sum_{j=1}^Mn_j\delta_{p_j}
\end{aligned}
\end{equation}
with positive integers $n_1,\ldots,n_M$ and distinct vortices $p_1, \ldots, p_M\in \mathbb{R}^2$. Here $\lambda>0$, and $\delta_{p_j}$ is the Dirac mass at $p_j$. A solution of $(1)$ or $(2)$ is called topological if $u(x)\rightarrow0$ as $|x|\rightarrow+\infty$, and called non-topological if $u(x)\rightarrow-\infty$ as $|x|\rightarrow+\infty$.

For the Abelian Higgs system $(2)$, Jaffe and Taubes proved the existence and uniqueness of general finite energy multivortex solutions to the Bogomol'nyi equations in \cite{jaffe1980vortices}, and there have been many studies on this model since then, such as \cite{jacobs1979interaction,jaffe1980vortices,wang1992abrikosov}. For the self-dual Chern-Simons system $(1)$, it is the minimal self-dual model containing the Chern-Simons term. The Chern-Simons vortices were discovered in \cite{jackiw1990self,hong1990multivortex}, which attracted people to investigate the existence problem. The existence of topological solutions in $\mathbb{R}^2$ was established in \cite{ronggang1991existence,spruck1995topological} by the variational method and iteration argument, and the existence of self-dual doubly periodic vortex solutions was proved in \cite{caffarelli1995vortex}. Later, non-topological solutions were studied in the literature, e.g. \cite{chen1994nonlinear,chae2000existence,chan2002non,choe2011existence}, and see \cite{dunne2009self,han2014existence,struwe1998multivortex,chae1997topological} for other related results.

Recently, people have paid attention to the elliptic equations on graphs. Grigor'yan, Lin and Yang first studied nonlinear elliptic equations on graphs, see e.g. \cite{grigor2016yamabe,grigor2017existence}. In a seminal paper \cite{huang2020existence}, Huang, Lin and Yau proved the existence result for solutions to $(1)$ on finite graphs. Furthermore, on a finite graph, the existence of solutions to the generalized self-dual Chern-Simons equation was proved in \cite{lu2021existence,hou2022existence}, and the existence of solutions to the Chern-Simons Higgs model has been proved in \cite{li2023topological} using topological degree methods recently. See \cite{grigor2016kazdan,huang2021mean,chao2023multiple} for other related results. For infinite graphs, existence results for the Kazdan-Warner equation were proved in \cite{article,article1} on graphs with positive spectrum and canonically compactifiable graphs, while lattice graphs $\mathbb{Z}^n$, i.e. discrete analogs of Euclidean spaces $\mathbb{R}^n$, are excluded. In this paper, our main contribution is to extend the results in \cite{huang2020existence} from finite graphs to lattice graphs. We study the Chern-Simons equation $(1)$ on $\mathbb{Z}^n$ for $n\geqslant2$, and prove the existence of topological solutions. Furthermore, using topological solutions of $(1)$, we prove the existence of topological solutions to the Abelian Higgs equation (2) on $\mathbb{Z}^n$ for $n\geqslant2$.


We consider the infinite integer lattice $\mathbb{Z}^n$ for $n\geqslant 2$, see Section~\ref{sec:1} for details. We define the distance on $\mathbb{Z}^n$ by
$$d(x,y)=\sum_{i=1}^n|x_i-y_i|,\ x,y\in\mathbb{Z}^n,$$
and write $d(x)=d(x,0)$. For any function $u:\mathbb{Z}^n\rightarrow \mathbb{R}$, the $l^p$-norm of $u$ is defined as
\begin{equation*}
\|u\|_{l^p(\mathbb{Z}^n)}=\left\{
\begin{aligned}
& \left(\sum_{x\in\mathbb{Z}^n}|u(x)|^p\right)^{\frac{1}{p}},\ 1\leqslant p<\infty,\\
& \sup_{x\in\mathbb{Z}^n}|u(x)|, \ p=\infty.\\
\end{aligned}
\right.
\end{equation*}
The Laplacian is defined as
$$\Delta u(x) = \sum_{d(x,y)=1}(u(y)-u(x)).$$
In the following we mainly consider topological solutions to the self-dual Chern-Simons vortex equation on $\mathbb{Z}^n$
\begin{equation*}
\left\{
\begin{aligned}
& \Delta u=\lambda e^u(e^u-1)+4\pi\sum_{j=1}^Mn_j\delta_{p_j}\ \ \text{on} \ \mathbb{Z}^n,\\
& \lim_{d(x)\rightarrow+\infty}u(x)=0,\\
\end{aligned}
\right.
\end{equation*}
and construct a topological solution to the above equation, which is maximal among all possible solutions. Our main result is as follows.

\tm\label{thm:main2}
The equation $(1)$ has a topological solution $u\in l^p(\mathbb{Z}^n)$ on $\mathbb{Z}^n$ for $1\leqslant p\leqslant\infty$ and $n\geqslant 2$, which is maximal among all possible solutions. Furthermore, we have the decay estimate
$$u= O(e^{-m(1-\epsilon) d(x)}),$$
where $m=\ln(1+\frac{\lambda}{2n})$, $0<\epsilon<1$.
\tmd

In this paper, we provide two proofs of Theorem~\ref{thm:main2}. In the proof A, we adopt the exhaustion method and the discrete isoperimetric inequality. The proof A is novel, which relies on the discrete nature of graphs in an essential way. First, by a  contradiction argument, we prove the existence of solutions on a finite subset $\Omega$ with Dirichlet boundary condition in Lemma~\ref{lm4}. In order to prove the existence result on $\mathbb{Z}^n$, we apply the exhaustion method. This approach was first introduced by Lin-Yang to the analysis on graphs in \cite{lin2022calculus}. Considering a sequence of finite subsets
$$\Omega_0\subset \Omega_1\subset\ldots\subset \Omega_k\subset\ldots,\ \ \bigcup_{i=1}^\infty\Omega_i=\mathbb{Z}^n,$$
and corresponding monotone solution sequence $\{u_{\Omega_i}\}$ obtained by Lemma~\ref{lm4}, we denote by $\widetilde{u^i}(x)$ the null extension to $\mathbb{Z}^n$ of $u_{\Omega_i}$, see Subsection~\ref{subsec:1}. By passing to the limit, to avoid the triviality of the limit, one needs to prove a uniform bound for all $\widetilde{u^i}$. Suppose that it is not true, then there exists $\lim_{i\rightarrow+\infty}\widetilde{u^i}(x_i)=-\infty$ for a vertex sequence $\{x_i\}$. Set
$$A_1^i=\{x\in\Omega_i:\ -C\leqslant \widetilde{u^i}(x)\leqslant0\},$$
$$A_2^i=\{x\in\Omega_i:\ \widetilde{u^i}(x)<-(2n+1)C-\lambda\},$$
$$A_3^i=\{x\in\Omega_i:\ -(2n+1)C-\lambda\leqslant\widetilde{u^i}(x)<-C\},$$
where $C=4\pi\sum_{j=1}^Mn_j$. Since $|\Delta\widetilde{u^i}(x)|$ is bounded on $\Omega_i$, we may prove that $A_3^i\neq\emptyset$ and $\lim_{i\rightarrow+\infty}|A_2^i|=+\infty$. These imply that $\lim_{i\rightarrow+\infty}|A_3^i|=+\infty$ by the isoperimetric inequality on $\mathbb{Z}^n$. However, summing over $\Omega_i$ in the equation, we know that $\sum_{x\in\Omega_i}e^{\widetilde{u^i}}(1-e^{\widetilde{u^i}})$ is uniformly bounded, which yields a contradiction. Hence we prove the $l^\infty-$convergence $\widetilde{u^i}\rightarrow u$ on $\mathbb{Z}^n$, and this limit is a topological solution. Applying the maximum principle and Lemma~\ref{lm5}, we finally get the decay estimate and the maximality of the constructed solution.

In the proof B, we follow the methods in \cite{spruck1995topological}. We prove a key lemma, Lemma~\ref{norm}, which provides the uniform $l^2$-norm estimate of the solution on a finite subset $\Omega$ with Dirichlet boundary condition. To prove this lemma, let $F(u)$ be the natural functional associated to the equation $(1)$ on $\Omega$. By Green's identities on graphs, we prove that $F(u_k)$ decreases with respect to $k$ and has a upper bound which only depends on $n$, $\lambda$ and $\sum_{j=1}^Mn_j$. Applying the discrete Gagliardo-Nirenberg-Sobolev inequality proved in \cite{porretta2020note}, we have
$$\|u_k\|_{l^2(\Omega)}\leqslant C_3(F(u_k)+1)\leqslant C_4,$$
where $C_3,C_4$ only depend on $n$, $\lambda$ and $\sum_{j=1}^Mn_j$. Thanks to this lemma, we may pass to the limit and get the solution $u_\Omega\in l^2(\Omega)$ on $\Omega$. Since its $l^2$-norm is uniformly bounded, we construct the solution $u$ of the equation $(1)$ on $\mathbb{Z}^n$ by the exhaustion method. As in the proof A, we prove the decay estimate and the maximality of the solution.

With the help of Theorem~\ref{thm:main2}, we prove the existence of topological solutions to the Abelian Higgs model.

\tm\label{thm:main3}
The equation $(2)$ has a unique topological solution $u'\in l^p(\mathbb{Z}^n)$ on $\mathbb{Z}^n$ for $1\leqslant p\leqslant\infty$ and $n\geqslant 2$, satisfying $u\leqslant u'\leqslant0$, where $u$ is constructed in Theorem~\ref{thm:main2}. Furthermore, there holds the decay estimate
$$u'= O(e^{-m(1-\epsilon) d(x)}),$$
where $m=\ln(1+\frac{\lambda}{2n})$, $0<\epsilon<1$.
\tmd

To prove Theorem~\ref{thm:main3}, we apply the sub-supersolution method. By choosing $\omega_1=0$ as a supersolution and $\omega_2=u$ as a subsolution, where
$u$ is constructed in Theorem~\ref{thm:main2}, we obtain a solution to (2) by the monotone iteration argument.

The paper is organized as follows:
In next section, we introduce the setting of graphs. In Section~\ref{sec:2}, we give two proofs of Theorem~\ref{thm:main2}. In Section~\ref{sec:3}, we prove Theorem~\ref{thm:main3}.

\textbf{Acknowledgements.} B.H. is supported by NSFC, No. 11831004. All authors of the paper are supported by Shanghai Science and Technology Program [Project No. 22JC1400100].

\section{Preliminaries}\label{sec:1}

\subsection{The setting of $\mathbb{Z}^n$}\label{subsec:1} \

\

Consider the infinite integer lattice graph $\mathbb{Z}^n$, $n\geqslant 2$, consisting of the set of vertices
$$V=\mathbb{Z}^n=\{x=(x_1,\ldots,x_n)\in\mathbb{R}^n:x_i\in \mathbb{Z},\forall 1\leqslant i\leqslant n\}$$
and the set of edges
$$E=\{\{x,y\}:x,y\in\mathbb{Z}^n,\sum_{i=1}^n|x_i-y_i|=1\},$$
and we write $x\sim y$ if $\{x,y\}\in E$. We denote by $C(\mathbb{Z}^n)=\{u:\mathbb{Z}^n\rightarrow\mathbb{R}\}$ the set of functions on $\mathbb{Z}^n$, and by $supp(u)=\{x\in\mathbb{Z}^n:u(x)\neq0\}$ the support of $u$, and by $C_0(\mathbb{Z}^n)$ the set of functions with finite support.
For a finite subset $\Omega\subset\mathbb{Z}^n$, we define the boundary of $\Omega$ as
$$\delta\Omega:=\{y\in \mathbb{Z}^n\setminus \Omega:\exists x\in\Omega\ \text{such that}\ y\sim x\},$$
and write $\overline{\Omega}=\Omega\cup \delta\Omega$. For $u\in C(\Omega)$, the null extension to $\mathbb{Z}^n$ of $u$ is defined as
\begin{equation*}
\widetilde{u}(x)=\left\{
\begin{aligned}
& u(x) &\text{on} \ \Omega,\\
& 0 &\text{on} \ \Omega^c.\\
\end{aligned}
\right.
\end{equation*}

We define the difference operator as
$$\nabla_{xy}u=u(y)-u(x),\ u\in C(\mathbb{Z}^n),\ x,y\in\mathbb{Z}^n.$$
For $f,g\in C(\overline{\Omega})$, we introduce a bilinear form
$$D_{\Omega}(f,g):=\frac{1}{2}\sum_{\substack{x,y\in \Omega\\ x\sim y}}\nabla_{xy}f\nabla_{xy}g+\sum_{\substack{x\in \Omega,y\in \delta\Omega\\ x\sim y}}\nabla_{xy}f\nabla_{xy}g,$$
and we write $D_\Omega(f)=D_\Omega(f,f)$ for the Dirichlet energy of $f$ on $\Omega$. For $f\in C(\overline{\Omega})$, the directional derivative operator $\frac{\partial f}{\partial \vec{n}}$ at $x\in \delta\Omega$ is defined as
$$\frac{\partial f}{\partial \vec{n}}(x):=\sum_{\substack{y\in \Omega\\ x\sim y}}(f(x)-f(y)).$$

The following are Green's identities on graphs, see e.g. \cite{grigor2018introduction}.
\begin{lemma}\label{lm1}
Let $f,g\in C(\mathbb{Z}^n)$ and $\Omega$ be a finite subset of $\mathbb{Z}^n$.
\begin{enumerate}[(a)]
\item If $f\in C_0(\mathbb{Z}^n)$, we have
$$\frac{1}{2}\sum_{\substack{x,y\in \mathbb{Z}^n\\ x\sim y}}\nabla_{xy}f\nabla_{xy}g=-\sum_{x\in\mathbb{Z}^n}f(x)\Delta g(x).$$
\item $$D_\Omega(f,g)=-\sum_{x\in\Omega}f(x)\Delta g(x)+\sum_{x\in\delta\Omega}f(x)\frac{\partial g}{\partial \vec{n}}(x).$$
\end{enumerate}
\end{lemma}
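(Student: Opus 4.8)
The plan is to read both identities as discrete integration by parts, so the whole argument rests on one elementary edge-wise identity together with careful bookkeeping of which edges and vertices are being summed. Since $f\in C_0(\mathbb{Z}^n)$ in (a) and $\Omega$ (hence $\overline\Omega$) is finite in (b), every sum below has only finitely many nonzero terms and all rearrangements are legitimate. The single algebraic fact I will use is that, for any edge $\{x,y\}$,
\[
\nabla_{xy}f\,\nabla_{xy}g=\bigl(f(y)-f(x)\bigr)\bigl(g(y)-g(x)\bigr)=-f(x)\bigl(g(y)-g(x)\bigr)-f(y)\bigl(g(x)-g(y)\bigr);
\]
the right-hand side is symmetric under $x\leftrightarrow y$ and apportions exactly one summand to each endpoint of the edge.

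For (a) I would sum this identity over all ordered pairs $(x,y)$ with $x\sim y$ in $\mathbb{Z}^n$. Relabelling $x\leftrightarrow y$ shows $\sum_{x\sim y}f(y)\bigl(g(x)-g(y)\bigr)=\sum_{x\sim y}f(x)\bigl(g(y)-g(x)\bigr)$, so the two endpoint families coincide and $\sum_{x\sim y}\nabla_{xy}f\,\nabla_{xy}g=-2\sum_{x\sim y}f(x)\bigl(g(y)-g(x)\bigr)$. Dividing by $2$ and collecting the inner sum into the Laplacian via $\sum_{y\sim x}\bigl(g(y)-g(x)\bigr)=\Delta g(x)$ gives $\tfrac12\sum_{x\sim y}\nabla_{xy}f\,\nabla_{xy}g=-\sum_{x}f(x)\Delta g(x)$, which is (a).

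For (b) I would first collapse $D_\Omega(f,g)$ into a single sum over unoriented edges meeting $\Omega$: the factor $\tfrac12$ cancels the double counting of each interior edge in the ordered sum, while each $\Omega$--$\delta\Omega$ edge is already listed once, so $D_\Omega(f,g)=\sum_{\{x,y\}:\,\{x,y\}\cap\Omega\neq\emptyset}\nabla_{xy}f\,\nabla_{xy}g$. Substituting the edge-wise identity and regrouping the endpoint summands by vertex (all endpoints lie in $\overline\Omega$), a vertex $v\in\Omega$ receives $-f(v)\bigl(g(w)-g(v)\bigr)$ from \emph{every} incident edge $\{v,w\}$, summing to $-f(v)\Delta g(v)$, whereas a vertex $v\in\delta\Omega$ receives such a summand only from edges running into $\Omega$, i.e. $-f(v)\sum_{w\in\Omega,\,w\sim v}\bigl(g(w)-g(v)\bigr)=f(v)\frac{\partial g}{\partial\vec n}(v)$. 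Adding the two families yields (b); note in passing that (a) is the special case in which $\Omega$ swallows $\mathrm{supp}(f)$ together with a one-step collar, so that $\delta\Omega$ carries no mass and the boundary term drops out. The only point demanding care — and the step I expect to be the genuine content here — is precisely this bookkeeping: matching the weight $\tfrac12$ on interior edges against weight $1$ on the $\Omega$--$\delta\Omega$ edges, and verifying that $\frac{\partial g}{\partial\vec n}$ is exactly the restriction of the vertex star at $v\in\delta\Omega$ to its neighbors inside $\Omega$. The remaining manipulations are purely algebraic and, thanks to finiteness, free of any convergence subtlety.
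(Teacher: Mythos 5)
Your proof is correct: the edge-wise identity, the double-counting bookkeeping (weight $\tfrac12$ on interior ordered pairs versus weight $1$ on $\Omega$--$\delta\Omega$ edges), and the sign of the normal derivative term all check out. The paper itself offers no proof of this lemma, citing it as standard from the literature (\cite{grigor2018introduction}), and your summation-by-parts argument is precisely the standard one implicit in that reference, so there is nothing to add.
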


\

\subsection{Maximum principle and discrete functional inequalities} \

\

In this subsection we introduce a maximum principle, the isoperimetric inequality, and the discrete Gagliardo-Nirenberg-Sobolev inequality on $\mathbb{Z}^n$, which play key roles in the proofs of main results. The following maximum principle is well-known.
\begin{lemma}\label{lm2}
Let $\Omega$ be a finite subset of $\mathbb{Z}^n$. For any positive $f\in C(\overline{\Omega})$, suppose that a function $v\in C(\overline{\Omega})$ satisfies
\begin{equation*}
\left\{
\begin{aligned}
& (\Delta-f)v\geqslant 0\ \ \text{on}\ \Omega,\\
& v\leqslant0\ \ \ \ \ \ \quad \ \ \text{on}\ \delta\Omega .\\
\end{aligned}
\right.
\end{equation*}
We have $v\leqslant0$ on $\overline{\Omega}$.
\end{lemma}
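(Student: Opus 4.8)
The plan is to use the standard discrete maximum principle argument, exploiting the finiteness of $\overline{\Omega}$ together with the strict positivity of $f$. Since $\overline{\Omega}=\Omega\cup\delta\Omega$ is a finite set, the function $v$ attains its maximum over $\overline{\Omega}$ at some vertex $x_0$. I would argue by contradiction, supposing that $\max_{\overline{\Omega}}v=v(x_0)>0$.

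First I would rule out that the maximum is attained on the boundary. Because $v\leqslant 0$ on $\delta\Omega$ by hypothesis while $v(x_0)>0$, the maximizer $x_0$ must lie in the interior set $\Omega$. The key structural fact to invoke here is that every neighbor $y\sim x_0$ of a vertex $x_0\in\Omega$ belongs to $\overline{\Omega}=\Omega\cup\delta\Omega$; this is immediate from the definition of $\delta\Omega$ as those vertices outside $\Omega$ adjacent to $\Omega$. Hence $x_0$ is in fact a maximizer of $v$ over a set containing all of its neighbors, so $v(y)\leqslant v(x_0)$ for every $y\sim x_0$.

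Next I would evaluate the Laplacian at $x_0$. By definition $\Delta v(x_0)=\sum_{y\sim x_0}\bigl(v(y)-v(x_0)\bigr)$, and since each summand is nonpositive we get $\Delta v(x_0)\leqslant 0$. On the other hand, the differential inequality $(\Delta-f)v\geqslant 0$ evaluated at $x_0$ gives $\Delta v(x_0)\geqslant f(x_0)v(x_0)$, and because $f(x_0)>0$ and $v(x_0)>0$ the right-hand side is strictly positive. The two conclusions $\Delta v(x_0)\leqslant 0$ and $\Delta v(x_0)>0$ are incompatible, so the assumption $v(x_0)>0$ fails and therefore $v\leqslant 0$ throughout $\overline{\Omega}$.

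There is no genuine obstacle in this argument; the only point requiring care is the observation that neighbors of interior vertices remain in $\overline{\Omega}$, which is what pins down the sign of $\Delta v(x_0)$ at the maximizer. I would emphasize that the strict positivity of $f$ is precisely what upgrades the weak bound $\Delta v(x_0)\leqslant 0$ into a strict contradiction: were $f$ merely nonnegative, one could not conclude pointwise and would instead need a propagation argument along a maximizing connected component, but here positivity makes the conclusion immediate.
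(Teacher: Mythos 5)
Your proof is correct and is essentially the same argument as the paper's: a contradiction at an interior maximizer with positive value, where the inequality $(\Delta-f)v\geqslant 0$ forces $\Delta v(x_0)>0$ while maximality forces $\Delta v(x_0)\leqslant 0$. The only cosmetic difference is that the paper states the contradiction as the existence of a neighbor with a strictly larger value, whereas you state it as the incompatibility of the two signs of $\Delta v(x_0)$; these are the same step read in opposite directions.
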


\begin{proof}

We prove the result by contradiction. Suppose that there exists $x\in \Omega$ such that $v(x)=\sup_{y\in\overline{\Omega}}v(y)=c>0$. By the equation,
$$\Delta v(x)\geqslant f(x)v(x)>0.$$
This implies that there exists $x_0\sim x,\ x_0\in\overline{\Omega}$, such that $v(x_0)>v(x)=c,$
which yields a contradiction.
\end{proof}

By the above lemma, we have the following corollary.

\co\label{co1}
For any positive $f\in C(\mathbb{Z}^n)$, suppose that a function $v\in l^2(\mathbb{Z}^n)$ satisfies
\begin{equation*}
\left\{
\begin{aligned}
& (\Delta-f)v\geqslant 0\ \ \text{on}\ \mathbb{Z}^n,\\
& \lim_{d(x)\rightarrow+\infty}v(x)\leqslant0 .\\
\end{aligned}
\right.
\end{equation*}
Then $v\leqslant0$ on $\mathbb{Z}^n$.
\cod
\

The isoperimetric inequality is well-known on $\mathbb{Z}^n$, see e.g. \cite{barlow2017random}, which is needed for our proof A. For $K\subset\mathbb{Z}^n$, we denote by $|K|$ the cardinality of the set $K$.

\begin{lemma}\label{II}
There exists a constant $C_n$, only depending on the dimension $n$, such that for any finite $\Omega\subset\mathbb{Z}^n$,
$$|\delta\Omega|\geqslant C_n|\Omega|^{\frac{n-1}{n}}.$$
\end{lemma}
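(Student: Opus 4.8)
The plan is to derive Lemma~\ref{II} from the discrete Loomis--Whitney inequality combined with an elementary projection argument; this route in fact produces the explicit value $C_n=1$.

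For $1\leqslant i\leqslant n$ let $\pi_i:\mathbb{Z}^n\to\mathbb{Z}^{n-1}$ be the projection forgetting the $i$-th coordinate, so that $\pi_i(\Omega)$ indexes the lines (``columns'') parallel to the $i$-th axis that meet $\Omega$. The first ingredient is the Loomis--Whitney inequality
$$|\Omega|^{n-1}\leqslant\prod_{i=1}^n|\pi_i(\Omega)|,$$
which bounds the cardinality of a finite set by the product of the cardinalities of its $n$ coordinate-hyperplane projections. I would either invoke it directly or establish it by induction on $n$: the case $n=2$ is immediate from $\Omega\subseteq\pi_2(\Omega)\times\pi_1(\Omega)$, and the inductive step slices $\Omega$ by the hyperplanes $\{x_n=\text{const}\}$ and assembles the slicewise estimates through the discrete H\"older inequality with the appropriate conjugate exponents.

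The second ingredient bounds the boundary below by each projection. Fix a direction $i$ and a nonempty column $\ell\in\pi_i(\Omega)$; let $t_{\min}$ be the least $i$-th coordinate attained by a point of $\Omega$ on $\ell$, and let $z_\ell$ be the point of $\ell$ with $i$-th coordinate $t_{\min}-1$. Then $z_\ell\notin\Omega$ but $z_\ell$ is adjacent to the bottommost point of $\Omega$ on $\ell$, so $z_\ell\in\delta\Omega$. Since $\pi_i(z_\ell)=\ell$, distinct columns yield distinct points $z_\ell$, and the assignment $\ell\mapsto z_\ell$ injects $\pi_i(\Omega)$ into $\delta\Omega$; hence $|\delta\Omega|\geqslant|\pi_i(\Omega)|$ for every $i$.

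Combining the two ingredients with the bound of the maximum by the geometric mean gives
$$|\delta\Omega|\geqslant\max_{1\leqslant i\leqslant n}|\pi_i(\Omega)|\geqslant\left(\prod_{i=1}^n|\pi_i(\Omega)|\right)^{1/n}\geqslant|\Omega|^{\frac{n-1}{n}},$$
which is the desired inequality with $C_n=1$. The projection argument and the final averaging are elementary, so the one genuinely substantial step---and the main obstacle for a self-contained account---is the Loomis--Whitney inequality, specifically organizing the H\"older induction and tracking its exponents; if one is content to cite it, as the paper does, the remainder is routine.
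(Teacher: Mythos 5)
Your proposal is correct, but a comparison with ``the paper's proof'' is moot: the paper does not prove Lemma~\ref{II} at all, it states the inequality as well known and cites \cite{barlow2017random}. Your argument is the standard self-contained route: the discrete Loomis--Whitney inequality $|\Omega|^{n-1}\leqslant\prod_{i=1}^n|\pi_i(\Omega)|$, the injection $\ell\mapsto z_\ell$ of each projection $\pi_i(\Omega)$ into the vertex boundary $\delta\Omega$ (sending a column to the lattice point just below its bottommost $\Omega$-point, which is indeed outside $\Omega$ and adjacent to it, hence in $\delta\Omega$ as defined in Section~\ref{sec:1}), and finally
$$|\delta\Omega|\geqslant\max_{1\leqslant i\leqslant n}|\pi_i(\Omega)|\geqslant\Bigl(\prod_{i=1}^n|\pi_i(\Omega)|\Bigr)^{1/n}\geqslant|\Omega|^{\frac{n-1}{n}}.$$
All of these steps are correct as written, and the only ingredient you defer, the H\"older induction behind Loomis--Whitney, is standard. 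What your approach buys over the paper's bare citation is self-containedness and the explicit constant $C_n=1$; this is sharper than what the paper needs, since the contradiction argument in proof A of Theorem~\ref{thm:main2} only requires some dimensional constant $C_n>0$. The only caveat is that you should state clearly that $\delta\Omega$ here is the outer vertex boundary (as in the paper), since the same inequality for the edge boundary would require a slightly different bookkeeping, though your column argument in fact adapts to that case as well.
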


For $p\geqslant1$, we define the $D^{1,p}$ norm as
$$\|u\|_{D^{1,p}(\mathbb{Z}^n)}:=\left(\sum_{x\in\mathbb{Z}^n}\sum_{y\sim x}|u(y)-u(x)|^p\right)^{\frac{1}{p}}.$$
In the proof B, we need the discrete Gagliardo-Nirenberg-Sobolev inequality on $\mathbb{Z}^n$. Since $\mathbb{Z}^n$ is a discrete regular mesh, the proof of Theorem 4.1 in \cite{porretta2020note} yields the following discrete Gagliardo-Nirenberg-Sobolev inequality.

\begin{lemma}\label{GN}$($\cite{porretta2020note}$)$
Let $n\geqslant2$, $p>1$, $\gamma\geqslant p$ and $p'=\frac{p}{p-1}$. Then for any $u\in l^p(\mathbb{Z}^n)$, we have
$$\|u\|^{\gamma}_{l^{\frac{\gamma n}{n-1}}(\mathbb{Z}^n)}\leqslant C(p,n,\gamma)\|u\|_{D^{1,p}(\mathbb{Z}^n)}\|u\|^{\gamma-1}_{l^{(\gamma-1)p'}(\mathbb{Z}^n)}.$$
\end{lemma}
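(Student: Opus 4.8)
The plan is to reduce the general statement to the endpoint ($p=1$, $\gamma=1$) discrete Sobolev inequality
$$\|v\|_{l^{\frac{n}{n-1}}(\mathbb{Z}^n)}\leqslant C_n\sum_{x\in\mathbb{Z}^n}\sum_{y\sim x}|v(y)-v(x)|,\qquad v\in C_0(\mathbb{Z}^n),$$
and then to bootstrap to general $p,\gamma$ by applying it to $|u|^{\gamma}$ together with a discrete chain-rule estimate and H\"older's inequality. This is exactly the discrete incarnation of the classical Gagliardo--Nirenberg scheme, and it is the route taken in Theorem~4.1 of \cite{porretta2020note}.

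First I would prove the endpoint inequality. Write $x=(x_1,\dots,x_n)$ and, for each direction $1\leqslant i\leqslant n$, let $\hat{x}_i\in\mathbb{Z}^{n-1}$ denote $x$ with its $i$-th coordinate deleted. Fixing all coordinates but the $i$-th, the restriction of $v$ to the line through $x$ parallel to $e_i$ has finite support, so a telescoping sum gives
$$|v(x)|\leqslant \sum_{t\in\mathbb{Z}}\big|v(x+(t+1)e_i)-v(x+te_i)\big|=:g_i(\hat{x}_i),$$
where the right-hand side depends only on $\hat{x}_i$. Hence $|v(x)|^{\frac{n}{n-1}}\leqslant \prod_{i=1}^n g_i(\hat{x}_i)^{\frac{1}{n-1}}$. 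Summing over $x\in\mathbb{Z}^n$ and invoking the discrete Loomis--Whitney (generalized H\"older) inequality
$$\sum_{x\in\mathbb{Z}^n}\prod_{i=1}^n g_i(\hat{x}_i)^{\frac{1}{n-1}}\leqslant \prod_{i=1}^n\Big(\sum_{\hat{x}_i\in\mathbb{Z}^{n-1}}g_i(\hat{x}_i)\Big)^{\frac{1}{n-1}}$$
yields the endpoint inequality, since each slice-sum $\sum_{\hat{x}_i}g_i(\hat{x}_i)$ is the sum of $|v(y)-v(x)|$ over the edges in direction $e_i$, hence bounded by the full edge-sum $\sum_x\sum_{y\sim x}|v(y)-v(x)|$.

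Next I would run the bootstrap. Applying the endpoint inequality to $v=|u|^{\gamma}$ and using $\big\||u|^{\gamma}\big\|_{l^{n/(n-1)}}=\|u\|_{l^{\gamma n/(n-1)}}^{\gamma}$, I obtain
$$\|u\|_{l^{\frac{\gamma n}{n-1}}(\mathbb{Z}^n)}^{\gamma}\leqslant C_n\sum_{x}\sum_{y\sim x}\big|\,|u(y)|^{\gamma}-|u(x)|^{\gamma}\big|.$$
The elementary bound $|a^{\gamma}-b^{\gamma}|\leqslant \gamma\max(a,b)^{\gamma-1}|a-b|$ for $a,b\geqslant0$ and $\gamma\geqslant1$ (mean value theorem), used with $a=|u(x)|$, $b=|u(y)|$ and $\big||u(y)|-|u(x)|\big|\leqslant|u(y)-u(x)|$, controls the summand by $\gamma\max(|u(x)|,|u(y)|)^{\gamma-1}|u(y)-u(x)|$. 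A discrete H\"older inequality over edges with exponents $p$ and $p'$ then splits off $\|u\|_{D^{1,p}(\mathbb{Z}^n)}$ and leaves
$$\Big(\sum_x\sum_{y\sim x}\max(|u(x)|,|u(y)|)^{(\gamma-1)p'}\Big)^{1/p'}\leqslant (4n)^{1/p'}\|u\|_{l^{(\gamma-1)p'}(\mathbb{Z}^n)}^{\gamma-1},$$
where the last bound uses $\max(a,b)^s\leqslant a^s+b^s$ together with the fact that each vertex of $\mathbb{Z}^n$ has exactly $2n$ neighbors. Combining these gives the claim with $C(p,n,\gamma)=C_n\,\gamma\,(4n)^{1/p'}$.

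The main obstacle is the endpoint inequality: one must correctly set up the telescoping slice-functions $g_i$ and justify the discrete Loomis--Whitney inequality, which is the combinatorial core of the argument. The bootstrap is then routine once the chain-rule estimate and neighbor-counting are in place. A minor technical point is to establish everything first for $u\in C_0(\mathbb{Z}^n)$ and then pass to general $u\in l^p(\mathbb{Z}^n)$ by truncation and exhaustion, noting that the inequality is vacuous when $\|u\|_{D^{1,p}(\mathbb{Z}^n)}=+\infty$, while for $p<\infty$ membership $u\in l^p$ already forces $u(x)\to0$ as $d(x)\to+\infty$, which is what the telescoping step requires.
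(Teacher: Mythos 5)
Your proposal is correct, but note that the paper does not actually prove this lemma: it is stated as a quotation of Theorem 4.1 of \cite{porretta2020note}, together with a remark that the same argument extends the parameter range from $p>n$, $\gamma>p$ to $p>1$, $\gamma\geqslant p$. Your write-up is therefore a self-contained reconstruction of the outsourced proof rather than an alternative to anything appearing in the paper, and the scheme you follow --- telescoping along coordinate lines to get slice functions $g_i(\hat{x}_i)$, the discrete Loomis--Whitney inequality to obtain the endpoint bound $\|v\|_{l^{n/(n-1)}(\mathbb{Z}^n)}\leqslant C_n\|v\|_{D^{1,1}(\mathbb{Z}^n)}$, then applying it to $|u|^{\gamma}$ with the mean-value chain rule $\bigl||a|^{\gamma}-|b|^{\gamma}\bigr|\leqslant\gamma\max(|a|,|b|)^{\gamma-1}\bigl||a|-|b|\bigr|$ and H\"older over edges --- is precisely the classical Gagliardo--Nirenberg route that the cited reference adapts to meshes, so in substance the two proofs coincide. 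Two points are worth making explicit in your version: (i) the hypothesis $\gamma\geqslant p$ is what makes the right-hand side finite for every $u\in l^p(\mathbb{Z}^n)$, since $(\gamma-1)p'\geqslant(p-1)p'=p$ gives $\|u\|_{l^{(\gamma-1)p'}(\mathbb{Z}^n)}\leqslant\|u\|_{l^p(\mathbb{Z}^n)}$ by the nesting of $l^q$ spaces, and $\|u\|_{D^{1,p}(\mathbb{Z}^n)}^p\leqslant 2^{p-1}\cdot 4n\,\|u\|^p_{l^p(\mathbb{Z}^n)}$ by $2n$-regularity of the lattice; this also renders your closing truncation/exhaustion step essentially unnecessary, because $u\in l^p$ with $p<\infty$ already forces $u(x)\rightarrow 0$ along every coordinate line, so the telescoping and all subsequent estimates hold verbatim with every quantity finite; (ii) in the paper's convention $\|u\|_{D^{1,p}(\mathbb{Z}^n)}^p$ counts each edge twice, so each of your directional slice-sums is in fact bounded by one half of the full edge-sum, which only improves your final constant $C(p,n,\gamma)=C_n\,\gamma\,(4n)^{1/p'}$.
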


\begin{remark}
Although Theorem 4.1 in \cite{porretta2020note} requires $p>n$ and $\gamma> p$, the above inequality in fact holds for any $p>1$ and $\gamma\geqslant p$ by the same argument in \cite{porretta2020note}. For $n\geqslant2$, choose
$$p=\gamma=2,\ p'=2.$$
With a well-known fact that for any $q\geqslant p$,
$$\|u\|_{l^q(\mathbb{Z}^n)}\leqslant\|u\|_{l^p(\mathbb{Z}^n)},$$
see e.g. in \cite{GLY}, we get for $u\in l^2(\mathbb{Z}^n)$,
$$\|u\|_{l^4(\mathbb{Z}^n)}\leqslant \|u\|_{l^{\frac{2n}{n-1}}(\mathbb{Z}^n)}\leqslant C_n' \|u\|^{\frac{1}{2}}_{D^{1,2}(\mathbb{Z}^n)}\|u\|^{\frac{1}{2}}_{l^2(\mathbb{Z}^n)}.$$

\end{remark}

\section{Existence theorems for the Chern-Simons equation}\label{sec:2}

In this section we consider the existence of topological solutions to the Chern-Simons equation, and we give two proofs of Theorem~\ref{thm:main2}. To prove this theorem, we first consider an iterative sequence on a finite subset of $\mathbb{Z}^n$.

Let $\Omega_0$ be a finite subset of $\mathbb{Z}^n$, satisfying $\Omega_0\supset \{p_j\}_{j=1}^M$, and $\Omega$ be an arbitrary connected finite subset such that $\Omega_0\subset\Omega\subset\mathbb{Z}^n$. We write
$$g=4\pi\sum_{j=1}^Mn_j\delta_{p_j},\ C=4\pi\sum_{j=1}^Mn_j,$$
and it is obvious that $g\in l^p(\mathbb{Z}^n)$ for any $p\geqslant1$. Choose a constant $K>2\lambda>0$. Let $u_0=0$ and consider the following iterative equations,
\begin{equation}
\left\{
\begin{aligned}
& (\Delta-K) u_k=\lambda e^{u_{k-1}}(e^{u_{k-1}}-1)+g-Ku_{k-1}\ \ \text{on} \ \Omega,\\
& u_k =0\ \ \text{on} \ \delta\Omega.\\
\end{aligned}
\right.
\end{equation}

\begin{lemma}\label{lm3}
Let the sequence $\{u_k\}$ be given in $(3)$. Then for each $k$, $u_k$ is uniquely defined and
$$0=u_0\geqslant u_1\geqslant u_2\geqslant\ldots.$$
\end{lemma}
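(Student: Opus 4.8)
The plan is to separate the two assertions of the lemma. First I would establish that each $u_k$ is well-defined and unique given $u_{k-1}$, and then prove the chain $0 = u_0 \geq u_1 \geq u_2 \geq \cdots$ by induction, using the maximum principle of Lemma~\ref{lm2} to control the sign of consecutive differences.

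For existence and uniqueness, observe that $(3)$ determines $u_k$ from $u_{k-1}$ through the linear problem $(\Delta - K)u_k = F$ on $\Omega$ with $u_k = 0$ on $\delta\Omega$, where $F$ is a known function on $\Omega$. Regarding this as a finite-dimensional linear system on functions supported in $\Omega$ (with zero boundary values), I would show $\Delta - K$ is invertible. Indeed, for $f\in C(\overline{\Omega})$ vanishing on $\delta\Omega$, Green's identity (Lemma~\ref{lm1}(b)) gives $-\sum_{x\in\Omega} f(x)\Delta f(x) = D_\Omega(f)\geq 0$, so that $\sum_{x\in\Omega} f(x)(\Delta - K)f(x) = -D_\Omega(f) - K\sum_{x\in\Omega} f(x)^2 < 0$ whenever $f\not\equiv 0$ on $\Omega$. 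Hence $\Delta - K$ is negative definite, thus invertible, and $u_k$ exists and is unique.

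For the monotonicity I would prove by induction the stronger statement $u_k \leq u_{k-1} \leq 0$. In the base step, substituting $u_0 = 0$ into $(3)$ yields $(\Delta - K)u_1 = g \geq 0$ on $\Omega$ with $u_1 = 0$ on $\delta\Omega$; applying Lemma~\ref{lm2} with the positive function $f\equiv K$ gives $u_1 \leq 0 = u_0$. In the inductive step, assume $u_{k-1}\leq u_{k-2}\leq 0$ and set $w = u_k - u_{k-1}$. Writing $\Phi(t) = Kt - \lambda e^t(e^t - 1)$, the right-hand side of $(3)$ equals $-\Phi(u_{k-1}) + g$, so subtracting the instances for $u_k$ and $u_{k-1}$ gives, pointwise on $\Omega$, the identity $(\Delta - K)w = \Phi(u_{k-2}) - \Phi(u_{k-1})$. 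The key computation is $\Phi'(t) = K - \lambda e^t(2e^t - 1) \geq K - \lambda > 0$ for all $t \leq 0$, so $\Phi$ is strictly increasing on $(-\infty, 0]$; here the hypothesis $K > 2\lambda > \lambda$ enters. Since $u_{k-1}\leq u_{k-2}\leq 0$, monotonicity of $\Phi$ yields $\Phi(u_{k-2}) - \Phi(u_{k-1}) \geq 0$, hence $(\Delta - K)w \geq 0$ on $\Omega$ while $w = 0$ on $\delta\Omega$; Lemma~\ref{lm2} then forces $w \leq 0$, i.e.\ $u_k \leq u_{k-1}$, and $u_k \leq 0$ follows, closing the induction.

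The routine parts are minor, and the one point requiring care is the sign of the nonlinear difference: the argument works only because every iterate stays $\leq 0$, keeping the intermediate argument of $\Phi$ in the range where $\Phi' > 0$. This is exactly why I carry the bound $u_{k-1}\leq 0$ through the induction alongside the monotonicity, and why the condition $K > 2\lambda$, which in particular gives $K > \lambda$ and hence $\Phi' > 0$ on $(-\infty,0]$, is essential.
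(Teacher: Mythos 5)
Your proof is correct and takes essentially the same route as the paper: induction on $k$ combined with the maximum principle of Lemma~\ref{lm2}, where the key point in both arguments is that $t\mapsto Kt-\lambda e^{t}(e^{t}-1)$ is increasing on $(-\infty,0]$, so that consecutive differences satisfy $(\Delta-K)(u_k-u_{k-1})\geq 0$ with zero boundary data. The only differences are cosmetic: the paper expresses the monotonicity via the mean value theorem applied to $e^{2t}$ after discarding the nonnegative term $-\lambda\bigl(e^{u_i}-e^{u_{i-1}}\bigr)$ (which is why it invokes $K>2\lambda$, while your computation of $\Phi'$ shows $K>\lambda$ already suffices), and you spell out the invertibility of $\Delta-K$ via negative definiteness, a routine point the paper leaves implicit.
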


\begin{proof}
First we have
\begin{equation}
\left\{
\begin{aligned}
& (\Delta-K) u_1=g\ \ \text{on} \ \Omega,\\
& u_1 =0\ \ \quad\quad\quad\ \text{on} \ \delta\Omega.\\
\end{aligned}
\right.
\end{equation}
One easily sees the existence and uniqueness of the solution $u_1$ on $\Omega$. Using Lemma~\ref{lm2}, we obtain that $u_1\leqslant0$.

Suppose that $0=u_0\geqslant u_1\geqslant u_2\geqslant\ldots\geqslant u_{i}$. Since
$$\lambda e^{u_i}(e^{u_i}-1)+g-Ku_i\in l^2(\Omega),$$
we have the existence and uniqueness of the solution $u_{i+1}$. From the equations $(3)$, we get
\begin{equation*}
\begin{aligned}
(\Delta-K)(u_{i+1}-u_i)&=\lambda(e^{2u_i}-e^{2u_{i-1}})-\lambda(e^{u_i}-e^{u_{i-1}})-K(u_i-u_{i-1})\\
&\geqslant2\lambda e^{2\omega}(u_i-u_{i-1})-K(u_i-u_{i-1})\\
&\geqslant K(e^{2\omega}-1)(u_i-u_{i-1})\geqslant 0,
\end{aligned}
\end{equation*}
where $\omega$ is a function satisfying $u_i\leqslant\omega\leqslant u_{i-1}$. This implies that $u_{i+1}\leqslant u_i$ by Lemma~\ref{lm2} and proves this lemma.
\end{proof}

\

\subsection{The proof A of Theorem~\ref{thm:main2}}

\

By Lemma~\ref{lm3}, we prove the convergence of the monotone sequence $\{u_k\}$.

\begin{lemma}\label{lm4}
Let $\{u_k\}$ be the sequence defined by $(3)$. Then there exists $u_\Omega\in C(\overline{\Omega})$ such that
$$u_k\rightarrow u_\Omega \ \text{on}\ \overline{\Omega},$$
which satisfies
\begin{equation}
\left\{
\begin{aligned}
& \Delta u_\Omega=\lambda e^{u_\Omega}(e^{u_\Omega}-1)+g \ \ \text{on} \ \Omega,\\
& u_\Omega =0\ \ \text{on} \ \delta\Omega.\\
\end{aligned}
\right.
\end{equation}
\end{lemma}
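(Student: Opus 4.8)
The plan is to prove convergence of the monotone decreasing sequence $\{u_k\}$ by establishing a \emph{uniform lower bound} on the finite set $\overline{\Omega}$, after which monotonicity forces pointwise convergence and the limit inherits the equation by continuity. By Lemma~\ref{lm3} we already know $0=u_0\geqslant u_1\geqslant u_2\geqslant\cdots$, so the only thing obstructing convergence to a genuine function (rather than to $-\infty$ at some vertex) is a lower bound independent of $k$. Since $\overline{\Omega}$ is a \emph{finite} set and each $u_k$ vanishes on $\delta\Omega$, it suffices to produce a fixed constant $-L$ with $u_k\geqslant -L$ on $\Omega$ for all $k$.

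To get this lower bound, the natural route is a maximum-principle comparison at the point where $u_k$ attains its minimum. First I would rewrite the iteration $(3)$ in the fixed-point form $\Delta u_k = \lambda e^{u_{k-1}}(e^{u_{k-1}}-1)+g+K(u_k-u_{k-1})$ and evaluate it at a vertex $x_*\in\Omega$ where $u_k$ is minimal. At such a point $\Delta u_k(x_*)=\sum_{y\sim x_*}(u_k(y)-u_k(x_*))\geqslant 0$, and away from the vortices $\{p_j\}$ the term $g$ vanishes. Observe that the function $t\mapsto \lambda e^{t}(e^{t}-1)$ is bounded below on $(-\infty,0]$ by $-\lambda/4$ (its minimum), so the right-hand side is controlled except for the coupling term $K(u_k-u_{k-1})$. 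This coupling is the delicate point: because $u_k\leqslant u_{k-1}$, we have $u_k-u_{k-1}\leqslant 0$, which has the \emph{wrong sign} for a direct lower bound. The cleanest way around this is to avoid the coupled iterates altogether and argue directly about the limit equation, or equivalently to bound the $u_k$ not through the recursion but through a $k$-independent supersolution.

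Concretely, I would seek a fixed function $\psi\leqslant 0$ on $\overline{\Omega}$ that is a subsolution of every iteration step, i.e.\ satisfying $(\Delta-K)\psi \geqslant \lambda e^{\psi}(e^{\psi}-1)+g-K\psi$ with $\psi\leqslant 0$ on $\delta\Omega$, and then show by induction (using Lemma~\ref{lm2} applied to $u_k-\psi$, exactly as monotonicity was proven in Lemma~\ref{lm3}) that $u_k\geqslant\psi$ for all $k$. A convenient choice is a large negative constant $\psi\equiv -L$: on the interior this requires $-K(-L)\geqslant \lambda e^{-L}(e^{-L}-1)-K(-L)$ away from vortices, which reduces to $0\geqslant\lambda e^{-L}(e^{-L}-1)$, automatically true since $e^{-L}<1$ makes the right side nonpositive; near the vortices one absorbs the positive mass $g$ by taking $L$ large, again using that the coupling term $KL$ dominates. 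Since $\overline{\Omega}$ is finite and fixed, such an $L=L(\Omega,\lambda,C,K)$ exists. This yields $-L\leqslant u_k\leqslant 0$ uniformly in $k$.

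With the two-sided bound in hand, monotonicity gives $u_k(x)\downarrow u_\Omega(x)$ for each $x\in\overline{\Omega}$, defining $u_\Omega\in C(\overline{\Omega})$ with $u_\Omega=0$ on $\delta\Omega$. Finally, since the Laplacian and the nonlinearity $t\mapsto\lambda e^{t}(e^{t}-1)$ are continuous and $\overline{\Omega}$ is finite (so there are no convergence subtleties in the finite sums defining $\Delta u_k$), I pass to the limit $k\to\infty$ in $(3)$: the term $K(u_k-u_{k-1})\to 0$ because both sequences converge to the same limit $u_\Omega$, and every other term converges by continuity. This produces exactly the equation~$(5)$, completing the proof. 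I expect the main obstacle to be handling the coupling term $K(u_k-u_{k-1})$ of the wrong sign; the resolution is to bypass the recursion and bound the iterates against a fixed constant subsolution via the maximum principle Lemma~\ref{lm2}, which is why the choice of $K>2\lambda$ and the sign structure of $\lambda e^{u}(e^{u}-1)$ on $(-\infty,0]$ are the critical ingredients.
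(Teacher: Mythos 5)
There is a genuine gap at the heart of your argument: the constant subsolution $\psi\equiv -L$ does not exist. Write out your required inequality $(\Delta-K)\psi\geqslant \lambda e^{\psi}(e^{\psi}-1)+g-K\psi$ with $\psi\equiv-L$: the left side is $KL$ and the right side is $\lambda e^{-L}(e^{-L}-1)+g+KL$, so the coupling terms $KL$ \emph{cancel identically} rather than dominate, and the condition reduces to $\lambda e^{-L}(1-e^{-L})\geqslant g(x)$ at every vertex. Away from the vortices ($g=0$) this is fine, exactly as you say; but at a vortex $p_j$ you need $\lambda e^{-L}(1-e^{-L})\geqslant 4\pi n_j$, and since $e^{-L}(1-e^{-L})\leqslant\tfrac14$ for all $L$ (and tends to $0$ as $L\to+\infty$), taking $L$ large makes the inequality \emph{harder}, not easier. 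No constant works at all unless $\lambda\geqslant 16\pi n_j$, which is not assumed. Since the vortices lie in $\Omega$, this kills the construction; the rest of your plan (the induction via Lemma~\ref{lm2}, which does need the monotonicity of $t\mapsto\lambda e^{t}(e^{t}-1)-Kt$ on $(-\infty,0]$ furnished by $K>2\lambda$, and the passage to the limit on the finite set) is sound but has nothing to compare against.

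Your framework can be repaired: instead of a constant, take $\psi$ to be the solution of the \emph{linear} Dirichlet problem $\Delta\psi=g$ on $\Omega$, $\psi=0$ on $\delta\Omega$. By the maximum principle $\psi\leqslant0$, hence $\lambda e^{\psi}(e^{\psi}-1)\leqslant 0$ and $\Delta\psi=g\geqslant\lambda e^{\psi}(e^{\psi}-1)+g$, so $\psi$ is a genuine ($k$-independent) subsolution and your induction then gives $u_k\geqslant\psi$. For comparison, the paper avoids subsolutions entirely: it sums equation $(3)$ over $\Omega$ and applies Green's identity (Lemma~\ref{lm1}) to get $\sum_{x\in B(\Omega)}|u_k(x)|\leqslant C$ on the inner boundary layer, then uses the resulting uniform bound on $|\Delta u_k|$ to propagate the bound inward vertex by vertex along the connected set $\Omega$. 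Note that the paper's comparison lemma (Lemma~\ref{lm5}), which is essentially the induction you are invoking, is proved \emph{after} Lemma~\ref{lm4} and is used for maximality, not for the existence step.
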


\

\begin{proof}
Since $\Omega$ is finite and the sequence is monotone, the pointwise limit $u_\Omega$ of $u_k$ exists. It suffices to show that $u_\Omega$ is bounded. We first consider the set
$$B(\Omega)=\{x\in \Omega: \exists y\in \delta\Omega\ \text{such that}\ y\sim x\}.$$
Summing over $\Omega$ in (3), and by Lemma~\ref{lm1} we obtain
\begin{equation*}
\begin{aligned}
&\sum_{x\in \delta\Omega}\frac{\partial u_k}{\partial \vec{n}}(x)+\lambda\sum_{x\in\Omega}e^{u_k}(1-e^{u_k})\\
=&\sum_{x\in\Omega}g(x)+K\sum_{x\in\Omega}(u_k(x)-u_{k-1}(x))\leqslant4\pi\sum_{j=1}^Mn_j=C.\\
\end{aligned}
\end{equation*}
This yields that
$$\sum_{x\in B(\Omega)}|u_k(x)|\leqslant C.$$
In particular, for any $x\in B(\Omega)$, the sequence $\{u_k(x)\}$ is uniformly bounded.

If $x_1\sim x_0$, $x_1\in\Omega$ and $x_0\in B(\Omega)$, we claim that $\{u_k(x_1)\}$ is uniformly bounded. The equation (3) at $x_0$ shows that
\begin{equation*}
\begin{aligned}
|\Delta u_k(x_0)|&\leqslant K|u_k(x_0)-u_{k-1}(x_0)|+\lambda|e^{u_{k-1}}(e^{u_{k-1}}-1)|+|g(x)|\\
&\leqslant K|u_k(x_0)|+\frac{\lambda}{4}+C\leqslant (K+1)C+\frac{\lambda}{4}.
\end{aligned}
\end{equation*}
Note that
$$\Delta u_k(x_0)=\sum_{y\sim x_0}(u_k(y)-u_k(x_0))\leqslant u_k(x_1)-2nu_k(x_0)\leqslant u_k(x_1)+2nC$$
and $u_k(x_1)<0$. We obtain that $\{u_k(x_1)\}$ is uniformly bounded.

Since $\Omega$ is connected, we repeat the above process, and get $\{u_k(x)\}$ is uniformly bounded on $\Omega$, which completes the proof of this lemma.
\end{proof}

\

Let $\Omega_i$, $1\leqslant i<\infty$, be finite and connected subsets, satisfying
$$\Omega_0\subset \Omega_1\subset\ldots\subset \Omega_k\subset\ldots,\ \ \bigcup_{i=1}^\infty\Omega_i=\mathbb{Z}^n.$$
We write $u^i =u_{\Omega_i}$ for simplicity. To prove Theorem~\ref{thm:main2}, we need the following lemma.
\begin{lemma}\label{lm5}
Let $\Omega$ be a finite subset of $\mathbb{Z}^n$ and $\{u_k\}$ be the sequence defined by $(3)$. For any function $V\in C(\overline{\Omega})$ satisfying
\begin{equation*}
\left\{
\begin{aligned}
& \Delta V\geqslant\lambda e^V(e^V-1)+g\ \ \text{on} \ \Omega,\\
& V(x) \leqslant0\ \ \text{on} \ \delta\Omega,\\
\end{aligned}
\right.
\end{equation*}
we have
$$0=u_0\geqslant u_1\geqslant\ldots\geqslant u_k\geqslant\ldots\geqslant u_\Omega\geqslant V.$$
\end{lemma}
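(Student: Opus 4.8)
The plan is to prove the chain of inequalities by induction on $k$, using the maximum principle (Lemma~\ref{lm2}) at each step to compare $V$ against the iterates $u_k$. The inequalities $0 = u_0 \geqslant u_1 \geqslant \ldots \geqslant u_\Omega$ are already established in Lemma~\ref{lm3} and Lemma~\ref{lm4}, so the genuinely new content is the claim that every iterate dominates the supersolution, i.e. $u_k \geqslant V$ for all $k$, which then passes to the limit to give $u_\Omega \geqslant V$. I would set up the induction with base case $u_0 = 0 \geqslant V$ on $\overline{\Omega}$: on $\delta\Omega$ we have $V \leqslant 0 = u_0$ by hypothesis, and on $\Omega$ we need $V \leqslant 0$, which should follow by applying the maximum principle to $V$ itself (the supersolution inequality $\Delta V \geqslant \lambda e^V(e^V-1) + g \geqslant \lambda e^V(e^V - 1)$ can be rearranged into the form $(\Delta - f)V \geqslant 0$ for a suitable positive $f$, after absorbing the sign of $e^V - 1$).

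For the inductive step, assume $u_{k-1} \geqslant V$ on $\overline{\Omega}$ and aim to show $u_k \geqslant V$. The strategy is to estimate $(\Delta - K)(u_k - V)$ and invoke Lemma~\ref{lm2} on the difference $V - u_k$. From the iteration $(3)$ we have $(\Delta - K)u_k = \lambda e^{u_{k-1}}(e^{u_{k-1}} - 1) + g - K u_{k-1}$, and from the supersolution hypothesis $\Delta V \geqslant \lambda e^V(e^V - 1) + g$. Subtracting, I would compute
\begin{equation*}
(\Delta - K)(V - u_k) \geqslant \lambda\bigl(e^{2V} - e^{2u_{k-1}}\bigr) - \lambda\bigl(e^V - e^{u_{k-1}}\bigr) - K(V - u_{k-1}).
\end{equation*}
The key is to show the right-hand side is $\geqslant 0$, which is exactly the same mean-value-theorem manipulation used in Lemma~\ref{lm3}: writing the differences of exponentials in terms of an intermediate value $\omega$ with $V \leqslant \omega \leqslant u_{k-1}$ (using the inductive hypothesis $V \leqslant u_{k-1}$), one gets a factor $2\lambda e^{2\omega} - \lambda e^\omega - K$ times $(V - u_{k-1})$. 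Since $V - u_{k-1} \leqslant 0$ and the choice $K > 2\lambda$ makes the bracket negative (as $e^\omega, e^{2\omega} \leqslant 1$), the product is $\geqslant 0$. Thus $(\Delta - K)(V - u_k) \geqslant 0$ on $\Omega$, and on $\delta\Omega$ we have $V - u_k = V \leqslant 0$; the maximum principle then gives $V - u_k \leqslant 0$, i.e. $u_k \geqslant V$.

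The main obstacle I anticipate is getting the sign bookkeeping in the mean-value step exactly right: the term $-\lambda(e^V - e^{u_{k-1}})$ carries the opposite sign from the quadratic term, so I must verify that the combined coefficient $2\lambda e^{2\omega} - \lambda e^{\omega} - K$ is negative for all admissible $\omega$ (where $e^\omega \leqslant 1$ since $V, u_{k-1} \leqslant 0$), and that this negativity combined with $V - u_{k-1} \leqslant 0$ yields a nonnegative product. The constraint $K > 2\lambda$ is precisely what guarantees $2\lambda e^{2\omega} - \lambda e^\omega - K \leqslant 2\lambda - K < 0$, so this should go through cleanly, mirroring the computation already carried out in the proof of Lemma~\ref{lm3}. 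Finally, passing to the pointwise limit $u_k \to u_\Omega$ preserves the inequality $u_\Omega \geqslant V$, completing the proof.
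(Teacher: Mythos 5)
Your proposal is correct and follows essentially the same route as the paper's proof: establish $V\leqslant 0$ on $\Omega$ first (the paper does this by the contradiction-at-an-interior-maximum argument, which is exactly the content of Lemma~\ref{lm2} that you invoke after the mean-value rewriting $\lambda e^V(e^V-1)=\lambda e^{V+\xi}V$), then induct using the same mean-value estimate $\bigl(2\lambda e^{2\omega}-\lambda e^{\omega}-K\bigr)(V-u_{k-1})\geqslant 0$ guaranteed by $K>2\lambda$ and $\omega\leqslant 0$, apply Lemma~\ref{lm2} to $V-u_k$, and pass to the pointwise limit. The sign bookkeeping you flag as the main risk is handled correctly, and matches the paper's computation (the paper merely bounds the bracket by $K(e^{2\omega}-1)$ instead of $2\lambda-K$).
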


\begin{proof}
First, one has
$$\Delta V\geqslant\lambda e^V(e^V-1)+g\geqslant\lambda e^V(e^V-1).$$
We claim that $\sup_{x\in\Omega}V(x)\leqslant0$. If not, choose $V(x_0)=\sup_{x\in\Omega}V(x)>0$ for some $x_0\in \Omega$. Then
$$0\geqslant\Delta V(x_0)\geqslant \lambda e^{V(x_0)}(e^{V(x_0)}-1)>0,$$
which yields a contradiction and proves the claim.

Suppose that $V\leqslant u_k$, then
\begin{equation*}
\begin{aligned}
(\Delta-K)(u_{k+1}-V)&\leqslant \lambda(e^{2u_k}-e^{2V})-\lambda(e^{u_k}-e^{V})-K(u_k-V)\\
&\leqslant K (e^{2\omega}-1)(u_k-V)\leqslant 0,
\end{aligned}
\end{equation*}
where the function $\omega$ satisfies $V\leqslant \omega\leqslant u_k\leqslant0$. This implies that $V\leqslant u_{k+1}$ by Lemma~\ref{lm2} and proves this lemma by the induction.
\end{proof}

\

Finally, we use these lemmas to prove Theorem~\ref{thm:main2}.
\begin{proof}[Proof of Theorem~\ref{thm:main2}]
For any integers $1\leqslant j\leqslant k$, we have $\Omega_j\subset \Omega_k$. On $\overline{\Omega_j}$, since $u^k\leqslant0$, one easily sees that $u^k$ satisfies the conditions in Lemma~\ref{lm5}, and we obtain
$$u^k\leqslant u^j \ \ \text{on}\ \overline{\Omega_j}.$$
Let $\widetilde{u^i}$ be the null extension to $\mathbb{Z}^n$ of $u^i$ on $\Omega_i$.
Note that
$$0\geqslant \widetilde{u^1}\geqslant\widetilde{u^2}\geqslant\ldots\geqslant\widetilde{u^k}\geqslant\ldots$$
on $\mathbb{Z}^n$.

In order to prove the convergence of the sequence $\{\widetilde{u^i}\}$ to a nontrivial limit, we need to give a uniform lower bound of $\{\widetilde{u^i}\}$. We argue by contradiction. Suppose that
$$\lim_{i\rightarrow+\infty}\|\widetilde{u^i}\|_{l^\infty}=+\infty.$$
We denote by
$$A_1^i=\{x\in\Omega_i:\ -C\leqslant \widetilde{u^i}(x)\leqslant0\},$$
$$A_2^i=\{x\in\Omega_i:\ \widetilde{u^i}(x)<-(2n+1)C-\lambda\},$$
$$A_3^i=\{x\in\Omega_i:\ -(2n+1)C-\lambda\leqslant\widetilde{u^i}(x)<-C\}.$$
By the conditions we assume, we get $A_2^i\neq\emptyset$ when $i\geqslant i_0$ for some $i_0$. In the following, we only consider $i\geqslant i_0$. Following the proof of Lemma~\ref{lm4}, we have
$$\sum_{x\in B(\Omega_i)}|\widetilde{u^i}(x)|\leqslant C,$$
which yields $B(\Omega_i)\subset A_1^i$ so that $A_1^i\neq \emptyset$. To obtain the uniform $l^\infty$-norm, we show the contradiction in three steps.

$(i)$: We claim that
$$A_3^i\neq\emptyset.$$
Suppose that $A_3^i=\emptyset$, which is equivalent to $A_1^i\cup A_2^i=\Omega_i$, and there exist two vertices $x,y\in \Omega_i$, satisfying
$x\sim y,\ x\in A_1^i,\ y\in A_2^i$. Note that
\begin{equation*}
\begin{aligned}
&\Delta \widetilde{u^i}(x)=\sum_{z\sim x}(\widetilde{u^i}(z)-\widetilde{u^i}(x))\leqslant \ \widetilde{u^i}(y)-2n\widetilde{u^i}(x)\\
&< -(2n+1)C-\lambda+2nC=-C-\lambda,
\end{aligned}
\end{equation*}
and
\begin{equation*}
\begin{aligned}
&|\Delta \widetilde{u^i}(x)|\leqslant |g(x)|+\lambda|e^{\widetilde{u^i}}(1-e^{\widetilde{u^i}})|<C+\lambda.
\end{aligned}
\end{equation*}
This yields a contradiction. Thus we have $A_3^i\neq\emptyset$.

$(ii)$: We claim that
$$\lim_{i\rightarrow+\infty}|A_2^i|=+\infty.$$
By
$$\lim_{i\rightarrow+\infty}\|\widetilde{u^i}\|_{l^\infty}=+\infty,$$
we choose a sequence $\{x_i\}$, where $x_i\in\Omega_i$, satisfying
$$\lim_{i\rightarrow+\infty}\widetilde{u^i}(x_i)=-\infty.$$
Consider the function
$$w_i(x)=\widetilde{u^i}(x-x_1+x_i),$$
and
$$\Omega'_i=\{x\in\mathbb{Z}^n:\ x-x_1+x_i\in\Omega_i\}.$$
We have
$$\lim_{i\rightarrow+\infty}w_i(x_1)=-\infty.$$
Let $i$ be sufficiently large. From the proof in $(i)$ we also have the estimate that for any $x\in \Omega'_i$,
$$|\Delta w_i(x)|<C+\lambda.$$
Consider an arbitrary vertex $y_1\in \Omega'_i$, $y_1\sim x_1$, and by the above facts,
\begin{equation*}
\begin{aligned}
\Delta w_i(y_1)=\sum_{z\sim y_1}(w_i(z)&-w_i(y_1))
\leqslant -2nw_i(y_1)+w_i(x_1).\\
\end{aligned}
\end{equation*}
This implies that
$$w_i(y_1)< \frac{1}{2n}w_i(x_1)+\frac{C+\lambda}{2n}.$$
Since $\Omega'_i$ is connected, $x_1\in\Omega'_i$ and
$$\lim_{i\rightarrow+\infty}|\Omega'_i|=+\infty,$$
there exist $y_1\sim x_1$, and a subsequence, still denoted by $\{\Omega'_i\}$, satisfying $y_1\in \Omega'_i$. One obtains that
$$\liminf_{i\rightarrow+\infty}w_i(y_1)=-\infty.$$
Repeating the above process,
$$\limsup_{i\rightarrow+\infty}|\{y\in \Omega'_i:\liminf_{i\rightarrow+\infty}w_i(y)=-\infty\}|=+\infty.$$
The monotonically decreasing sequence $\{\widetilde{u^i}\}$ guarantees that
$$A^i_2\subset A^{i+1}_2.$$
By letting $i\rightarrow+\infty$, one easily sees that
$$\lim_{i\rightarrow+\infty}|A_2^i|=\limsup_{i\rightarrow+\infty}|A_2^i|=+\infty.$$

$(iii)$: From $(i),(ii)$, we want to prove that
$$\limsup_{i\rightarrow+\infty}|A_3^i|=+\infty.$$
We argue by contradiction.

Suppose that
$$\limsup_{i\rightarrow+\infty}|A_3^i|= N<+\infty.$$
We focus on the set $\Omega_i\setminus A_3^i=A_1^i\cup A_2^i$, which can be divided into a union of several disjoint connected subsets, i.e.
$$\Omega_i\setminus A_3^i=\bigcup_{j=1}^lO_j,$$
and we have
$$\delta O_j\subset \delta\Omega_i\cup A_3^i.$$

From the proof of $(i)$, we have
$$O_j\subset A_1^i\ \text{or}\ O_j\subset A_2^i.$$
For some $1\leqslant l_1\leqslant l-1$, without loss of generality, we may assume that
$$A_1^i=\bigcup_{j=1}^{l_1}O_j,A_2^i= \bigcup_{j=l_1+1}^lO_j.$$
Since $B(\Omega_i)\subset A_1^i$, we get
$$\delta\Omega_i\subset \bigcup_{j=1}^{l_1}\delta O_j.$$
Thus for $l_1+1\leqslant j\leqslant l$,
$$\delta O_j\subset A_3^i.$$
For any $x\in\Omega_i$, since $O_1,\ldots,O_l$ are disjoint connected sets and $|\delta \{x\}|=2n$, there are no more than $2n$ sets from the family $\{O_j\}_{j=1}^l$ satisfying $x\in \delta O_j$.

By the isoperimetric inequality in Lemma~\ref{II}, we have the following estimate
\begin{equation*}
\begin{aligned}
|A_2^i|&= \sum_{j=l_1+1}^l|O_j|\leqslant \left(\frac{1}{C_n}\right)^{\frac{n}{n-1}}\sum_{j=l_1+1}^l|\delta O_j|^{\frac{n}{n-1}}\\
&\leqslant\left(\frac{1}{C_n}\right)^{\frac{n}{n-1}}\left(\sum_{j=l_1+1}^l|\delta O_j|\right)^{\frac{n}{n-1}}
\leqslant \left(\frac{2n}{C_n}\right)^{\frac{n}{n-1}}|A_3^i|^{\frac{n}{n-1}}\\
&\leqslant \left(\frac{2nN}{C_n}\right)^{\frac{n}{n-1}},
\end{aligned}
\end{equation*}
which contradicts the claim proved in $(ii)$ by letting $i\rightarrow+\infty$.

With this fact, we choose a small constant $\epsilon$ satisfying
$$0<\epsilon<\inf_{x\in[-(n+1)C-\lambda,-C]}e^x(1-e^x).$$
From (5), we get
\begin{equation*}
\begin{aligned}
\sum_{x\in \delta\Omega_i}\frac{\partial\widetilde{u^i}}{\partial \vec{n}}(x)+\lambda\sum_{x\in\Omega_i}e^{\widetilde{u^i}}(1-e^{\widetilde{u^i}})
=\sum_{x\in\Omega_i}g(x)=4\pi\sum_{j=1}^Mn_j=C,
\end{aligned}
\end{equation*}
which implies that
$$\sum_{x\in\Omega_i}e^{\widetilde{u^i}}(1-e^{\widetilde{u^i}})\leqslant \frac{C}{\lambda}$$
and
$$\sum_{x\in A_3^i}\epsilon\leqslant \frac{C}{\lambda}.$$
This yields a contradiction to $(iii)$. Thus $\{\widetilde{u^i}\}$ has a uniform bound in $l^\infty(\mathbb{Z}^n)$, and we have the pointwise convergence
$$\lim_{i\rightarrow+\infty}\widetilde{u^i}(x)=u(x),\ \forall x\in\mathbb{Z}^n,$$
where $u\in l^\infty(\mathbb{Z}^n)$ and satisfies the self-dual Chern-Simons vortex equation (1). From the above inequality and the fact that $u$ has a lower bound, we pass to the limit, and get that for any $i\geqslant1$,
$$e^{\inf_{x\in \mathbb{Z}^n}u(x)}\sum_{x\in\Omega_i}(1-e^u)\leqslant \sum_{x\in\Omega_i}e^u(1-e^u)\leqslant\frac{C}{\lambda},$$
which yields that $u$ is a topological solution.

The solution is maximal follows from Lemma~\ref{lm5}. On any finite subset $\Omega$, by Lemma~\ref{lm5}, we obtain that the solution $u_\Omega$ is maximal. On $\mathbb{Z}^n$, we suppose that there exists another topological solution $f$ of the self-dual Chern-Simons vortex equation. From the proof of Lemma~\ref{lm5}, we observe that $f\leqslant0$ on $\mathbb{Z}^n$. Applying Lemma~\ref{lm5} on $\Omega_i$, we have
$$f \leqslant u^i.$$
For a fixed integer $k\geqslant1$, and for $i\geqslant k$ we have
$$f(x) \leqslant \underline{\lim}_{i\rightarrow\infty}u^i(x)=u(x)\ \ \text{on}\ \Omega_k.$$
For any $x\in\mathbb{Z}^n$, there exists a sufficiently large integer $k$ satisfying $x\in \Omega_k$ such that $f(x)\leqslant u(x)$. Thus we obtain $f\leqslant u$ on $\mathbb{Z}^n$, and the solution $u$ is maximal among all possible solutions.

The last part is to prove the decay estimate
$$u= O(e^{-m(1-\epsilon) d(x)}),$$
where $m=\ln(1+\frac{\lambda}{2n})$.
Note that the solution $u$ satisfies
$$\Delta u=\lambda e^u(e^u-1)\ \ \text{on}\ \overline{\Omega_0}^c.$$
Since
$$\lim_{d(x)\rightarrow+\infty}u(x) =0,$$
for any $0<\epsilon<1$, we can choose $R\geqslant1$ sufficiently large such that
$$\lambda e^{2u}\geqslant 2n\left[\left(1+\frac{\lambda}{2n}\right)^{1-\epsilon}-1\right],\ \ d(x)\geqslant R.$$
Then for $d(x)\geqslant R$,
$$\Delta u=\lambda e^u(e^u-1)=\lambda e^{u+\omega}u\leqslant \lambda e^{2u}u \leqslant c_3 u,$$
where the function $\omega$ satisfies $u\leqslant\omega\leqslant 0$ and $c_3=2n\left[\left(1+\frac{\lambda}{2n}\right)^{1-\epsilon}-1\right]$.

Consider the function $h(x)= -e^{-m(1-\epsilon)d(x)}$. Let $e_i$ be the vector whose $i$-th component is 1 and the others are 0. For $x\in \overline{\Omega_0}^c$ and $d(x)\geqslant R$, suppose that $d(x)=t\geqslant R\geqslant1$, and we have
$$\Delta h(x)=\sum_{y\sim x}(h(y)-h(x))=\sum_{i=1}^n(h(x+e_i)+h(x-e_i)-2h(x)).$$
If $x_i\neq0$, then
$$h(x+e_i)+h(x-e_i)-2h(x)=-e^{-m(1-\epsilon)(t-1)}-e^{-m(1-\epsilon)(t+1)}+2e^{-m(1-\epsilon)t}.$$
If $x_i=0$, we have
\begin{equation*}
\begin{aligned}
h(x+e_i)+h(x-e_i)-2h(x)&=-2e^{-m(1-\epsilon)(t+1)}+2e^{-m(1-\epsilon)t}\\
&\geqslant -e^{-m(1-\epsilon)(t-1)}-e^{-m(1-\epsilon)(t+1)}+2e^{-m(1-\epsilon)t}.
\end{aligned}
\end{equation*}
Therefore, we have the following inequality
\begin{equation*}
\begin{aligned}
\Delta h(x)&\geqslant n\left[-e^{-m(1-\epsilon)(t-1)}-e^{-m(1-\epsilon)(t+1)}+2e^{-m(1-\epsilon)t}\right]\\
&=n\left[e^{-m(1-\epsilon)}+e^{m(1-\epsilon)}-2\right]h(x)\\
&=n\left[\left(1+\frac{c_3}{2n}\right)+\frac{1}{1+\frac{c_3}{2n}}-2\right]h(x)\\
&\geqslant n\left[2\left(1+\frac{c_3}{2n}\right)-2\right]h(x)=c_3h(x).
\end{aligned}
\end{equation*}
Fix a subset
$$\Omega_0'=\{x\in\mathbb{Z}^n: d(x)\geqslant R_1\geqslant R\},$$
which satisfies $\Omega_0'\cap \overline{\Omega_0}=\emptyset$. By choosing a large constant $C(\epsilon)$, we obtain
$$(\Delta-c_3)(C(\epsilon)h-u)\geqslant0\ \ \text{on}\ \Omega_0',$$
$$\lim_{|x|\rightarrow+\infty}(C(\epsilon)h-u)(x)=0,$$
and
$$C(\epsilon)h(x)-u(x)\leqslant 0 \ \ \text{if}\ d(x)=R_1.$$
These imply that
$$0\geqslant u(x)\geqslant -C(\epsilon) e^{-m(1-\epsilon)d(x)}\ \ \text{on}\ \Omega_0',$$
completing Theorem~\ref{thm:main2}. As a consequence, we obtain $u\in l^p(\mathbb{Z}^n)$, $1\leqslant p\leqslant\infty$.
\end{proof}

\

\subsection{The proof B of Theorem~\ref{thm:main2}}

\

The proof B follows the methods in \cite{spruck1995topological}. We mainly prove the following key lemma.
\begin{lemma}\label{norm}
Let $n\geqslant2,\ \lambda>0$, and $\Omega_0$ be a finite subset of $\mathbb{Z}^n$ containing the distinct points $\{p_j\}_{j=1}^M$. For any finite subset $\Omega\supset\Omega_0$, the boundary value problem
\begin{equation*}
\left\{
\begin{aligned}
& \Delta u=\lambda e^u(e^u-1)+4\pi\sum_{j=1}^Mn_j\delta_{p_j}\ \ \text{on} \ \Omega,\\
& u(x) =0\ \ \text{on} \ \delta\Omega\\
\end{aligned}
\right.
\end{equation*}
has a solution $u_{\Omega}:\overline{\Omega}\rightarrow\mathbb{R}$. This solution is maximal among all possible solutions and satisfies that $\|u_\Omega\|_{l^2(\Omega)}\leqslant C_0$, where $C_0$ only depends on $n$, $\lambda$ and $C$.
\end{lemma}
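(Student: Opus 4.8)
The plan is to establish the three assertions separately, with only the uniform $l^2$ bound requiring real work. Existence and maximality need no new idea: running the monotone scheme $(3)$ on $\Omega$, Lemma~\ref{lm3} and Lemma~\ref{lm4} produce $u_\Omega=\lim_k u_k\in C(\overline\Omega)$ solving the boundary value problem, while Lemma~\ref{lm5} shows that $u_\Omega$ dominates every subsolution $V$ with $V\le0$ on $\delta\Omega$, in particular every solution, so $u_\Omega$ is maximal. It then remains to bound $\|u_\Omega\|_{l^2(\Omega)}$ by a constant independent of $\Omega$.

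For this I would introduce the natural energy functional
\begin{equation*}
F(u)=\tfrac12 D_\Omega(u)+\tfrac{\lambda}{2}\sum_{x\in\Omega}\bigl(e^{u}-1\bigr)^2+\sum_{x\in\Omega}g\,u,
\end{equation*}
whose Euler--Lagrange equation, computed through Green's identity (Lemma~\ref{lm1}(b)) for variations vanishing on $\delta\Omega$, is exactly $(1)$ with the Dirichlet condition. The first step is to show the iterates are energy decreasing, $F(u_k)\le F(u_{k-1})$: expanding $F(u_k)-F(u_{k-1})$, rewriting the quadratic part by Green's identity and using $(3)$, the choice $K>2\lambda$ makes the scheme dissipative and yields $F(u_k)\le F(u_{k-1})\le\cdots\le F(u_0)=F(0)=0$. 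Taking the pointwise limit on the finite set $\overline\Omega$ gives $F(u_\Omega)\le0$, that is, with $E=D_\Omega(u_\Omega)$ and $P=\sum_{x\in\Omega}(e^{u_\Omega}-1)^2$,
\begin{equation*}
\tfrac12 E+\tfrac{\lambda}{2}P\ \le\ -\sum_{x\in\Omega}g\,u_\Omega .
\end{equation*}

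The crux is to turn this into an $l^2$ bound blind to the size of $\Omega$, via two independent estimates. \emph{(i) Coercivity.} Split $\sum u_\Omega^2$ over $\{-1\le u_\Omega\le0\}$, where $u_\Omega^2\le c\,(e^{u_\Omega}-1)^2$, so the contribution is $\le cP$, and over $\{u_\Omega<-1\}$, where Cauchy--Schwarz gives $\sum_{u_\Omega<-1}u_\Omega^2\le\|u_\Omega\|_{l^4}^2\,|\{u_\Omega<-1\}|^{1/2}$ with $|\{u_\Omega<-1\}|\le(e^{-1}-1)^{-2}P$; by the discrete Gagliardo--Nirenberg--Sobolev inequality (Lemma~\ref{GN} and the subsequent Remark) together with $\|\widetilde{u_\Omega}\|_{D^{1,2}}^2=2E$ one has $\|u_\Omega\|_{l^4}^2\le C\sqrt{E}\,\|u_\Omega\|_{l^2}$. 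Solving the resulting quadratic inequality in $\|u_\Omega\|_{l^2}$ yields $\|u_\Omega\|_{l^2}\le C_3(E+P+1)$. \emph{(ii) Smallness of the energy.} Control the source by Hölder and the same Sobolev inequality, $|\sum g\,u_\Omega|\le\|g\|_{l^{4/3}}\|u_\Omega\|_{l^4}\le c_g\,E^{1/4}\|u_\Omega\|_{l^2}^{1/2}$, where $\|g\|_{l^{4/3}}$ is a fixed constant depending only on the data; feeding this into the energy inequality gives $E+P\le C\,E^{1/4}\|u_\Omega\|_{l^2}^{1/2}$, hence $E+P\le C'\|u_\Omega\|_{l^2}^{2/3}$.

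Combining (i) and (ii) closes the estimate: $\|u_\Omega\|_{l^2}\le C_3\bigl(C'\|u_\Omega\|_{l^2}^{2/3}+1\bigr)$, and since the feedback exponent $2/3<1$ this forces $\|u_\Omega\|_{l^2}\le C_0$ with $C_0$ depending only on $n$, $\lambda$ and $C=4\pi\sum_j n_j$. I expect the main obstacle to be precisely this coupling: neither the potential $(e^{u}-1)^2$, which saturates where $u$ is very negative, nor the Dirichlet energy alone controls $\|u\|_{l^2}$ on an unbounded domain, and one cannot bound the vortex values $u_\Omega(p_j)$ pointwise a priori (this is exactly what fails as $\Omega\uparrow\mathbb{Z}^n$ and drives the contradiction argument in Proof~A). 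The resolution is to pair the singular source through the $l^{4/3}$--$l^4$ Hölder estimate so that it contributes only a \emph{sublinear} feedback in $\|u_\Omega\|_{l^2}$ that the Sobolev inequality absorbs; a secondary delicate point is verifying the energy monotonicity $F(u_k)\le F(u_{k-1})$ in the presence of the non-convexity of the potential for $u<-\ln2$, which is where the large shift $K$ is used.
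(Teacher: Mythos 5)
Your proof is correct, and it is in essence the paper's Proof B with a differently organized coercivity step, so a short comparison is worthwhile. You share the paper's architecture exactly: the functional $F$, the dissipativity $F(u_k)\leqslant F(u_{k-1})$ coming from $K>2\lambda$ (the paper's Lemma~\ref{lm6}, proved via concavity of $\frac{\lambda}{2}(e^x-1)^2-\frac{K}{2}x^2$ on $x\leqslant 0$), the discrete Gagliardo--Nirenberg--Sobolev inequality (Lemma~\ref{GN} and its Remark), and maximality via Lemma~\ref{lm5}. The genuine differences are in how the $l^2$ bound is extracted. (a) You run the monotonicity all the way down to $F(u_0)=F(0)=0$ and work with $F(u_\Omega)\leqslant 0$; the paper stops the chain at $F(u_1)$ and bounds $F(u_1)\leqslant c_0$ by a separate computation using $D_\Omega(u_1)+K\|u_1\|^2_{l^2(\Omega)}=-\sum_{x\in\Omega}gu_1$. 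Your sharper endpoint is legitimate, since $u_1\leqslant u_0=0$ keeps the concavity argument valid at $k=1$. (b) For coercivity the paper proves the standalone inequality $\|v\|_{l^2(\Omega)}\leqslant C_2(F(v)+1)$ (Lemma~\ref{lm7}) using the weight $|v|/(1+|v|)$, the bound $|e^v-1|\geqslant |v|/(1+|v|)$, and an $\epsilon$-absorption of the source term $\sum gv$; you instead split $\sum u_\Omega^2$ at the level $-1$, control the far region by Cauchy--Schwarz, the measure bound $|\{u_\Omega<-1\}|\lesssim P$ and GNS, and then close with the sublinear feedback $\|u_\Omega\|_{l^2}\leqslant C(\|u_\Omega\|_{l^2}^{2/3}+1)$, which indeed forces a bound since the exponent $2/3<1$. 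Both routes are valid and rest on the same key lemma; what the paper's form buys is a coercivity statement for every $v$ vanishing on $\delta\Omega$, which it applies to each iterate $u_k$, and this is also how the paper obtains existence of $u_\Omega$ (Lemma~\ref{lm7} plus Lemma~\ref{lm3}: monotone and uniformly $l^2$-bounded) without invoking Lemma~\ref{lm4}. That last point yields one small caveat on your write-up: Lemma~\ref{lm4}'s proof propagates boundedness from the boundary inward and so uses connectedness of $\Omega$ (a standing assumption in that section), whereas Lemma~\ref{norm} is stated for an arbitrary finite $\Omega\supset\Omega_0$; if you want that generality, apply your estimates (i) and (ii) to each $u_k$ --- they satisfy $F(u_k)\leqslant 0$ as well --- and deduce convergence from monotonicity plus the uniform bound, exactly as the paper does.
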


By Green's identities in Lemma~\ref{lm1}, we consider the following functional on $\Omega$
$$F(u)=\frac{1}{2}D_\Omega(u)+\sum_{x\in\Omega}\left[\frac{\lambda}{2}(e^{u(x)}-1)^2+g(x)u(x)\right].$$
We prove the following lemma which states that $F(u_k)$ deceases with respect to $k$.

\begin{lemma}\label{lm6}
Let $\{u_k\}$ be the sequence defined by $(3)$. Then
$$c_0\geqslant F(u_1)\geqslant F(u_2)\geqslant\ldots\geqslant F(u_k)\geqslant\ldots,$$
where the constant $c_0$ only depends on $n, C, \lambda$.
\end{lemma}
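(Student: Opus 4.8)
The plan is to prove Lemma~\ref{lm6} by showing that the iteration $(3)$ is essentially a gradient-descent-type scheme for the functional $F$: each step decreases $F$, and the upper bound $c_0$ comes from evaluating $F$ on the explicit first iterate $u_1$ (or some comparison function with controlled energy). The key algebraic identity to extract is the difference $F(u_k)-F(u_{k+1})$, which I expect to be a sum of manifestly nonnegative terms once the iteration equation is substituted.

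First I would compute $F(u_k)-F(u_{k+1})$ directly from the definition. Using the bilinearity of $D_\Omega$, write
\begin{equation*}
\tfrac{1}{2}D_\Omega(u_k)-\tfrac{1}{2}D_\Omega(u_{k+1})
=\tfrac{1}{2}D_\Omega(u_k-u_{k+1})+D_\Omega(u_{k+1},u_k-u_{k+1}).
\end{equation*}
The second term I would rewrite via Green's identity (Lemma~\ref{lm1}(b)) as $-\sum_{x\in\Omega}(u_k-u_{k+1})\Delta u_{k+1}$, since $u_k-u_{k+1}$ vanishes on $\delta\Omega$. Then I substitute the iteration equation $(3)$, which gives $\Delta u_{k+1}=K u_{k+1}+\lambda e^{u_k}(e^{u_k}-1)+g-Ku_k$, to express everything in terms of the $u_k$'s, the nonlinear terms $e^{u}$, and the linear coupling through $K$. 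The remaining pieces of $F$, namely the sum $\tfrac{\lambda}{2}[(e^{u_k}-1)^2-(e^{u_{k+1}}-1)^2]+\sum g(u_k-u_{k+1})$, I would combine with these, hoping the $g$-terms cancel cleanly.

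The main obstacle, as always in these energy-decay arguments, is controlling the nonlinear terms: after substitution one is left with a combination like $\tfrac{\lambda}{2}[(e^{u_k}-1)^2-(e^{u_{k+1}}-1)^2]-\lambda(e^{u_k}-e^{u_k\cdot 2})\cdot\text{(something)}$, and I would need a convexity/Taylor estimate to show the net contribution is nonnegative. Here I would exploit two facts already available: that $u_k,u_{k+1}\leqslant 0$ (Lemma~\ref{lm3}), so $e^{u}\in(0,1]$, and that $K>2\lambda$ was chosen precisely to dominate the derivative of the nonlinearity. Concretely, I expect the identity to reduce to $F(u_k)-F(u_{k+1})=\tfrac{1}{2}D_\Omega(u_k-u_{k+1})+\sum_{x\in\Omega}\Phi(u_k,u_{k+1})$, where $\Phi\geqslant 0$ follows from a mean-value expansion of $\lambda e^u(e^u-1)-\tfrac{\lambda}{2}\frac{d}{dt}(e^t-1)^2$ combined with the gap $K-2\lambda e^{2\omega}>0$ guaranteed by $K>2\lambda$ and $e^{2\omega}\leqslant 1$.

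Finally, for the upper bound $c_0$, I would estimate $F(u_1)$ explicitly. Since $u_1$ solves the \emph{linear} problem $(4)$, i.e. $(\Delta-K)u_1=g$, its Dirichlet energy and $l^2$-type quantities are controlled by $\|g\|$ and $K$ alone; using $u_1\leqslant 0$ bounds the term $\tfrac{\lambda}{2}(e^{u_1}-1)^2\leqslant \tfrac{\lambda}{2}$ pointwise (but summed over $\Omega$ this is not finite, so I would instead bound $F(u_1)$ using that $F$ is decreasing combined with a test-function comparison, or bound $F(u_1)$ via Green's identity $D_\Omega(u_1)=-\sum u_1\Delta u_1=\sum u_1(Ku_1+g)$ and the pointwise control $\sum_{x}|u_1|$ from the boundary-flux argument in Lemma~\ref{lm4}). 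The cleanest route is to note $F(u_1)\leqslant F(0)+[\,F(u_1)-F(0)\,]$ and observe that the candidate $c_0$ depending only on $n,C,\lambda$ emerges because all $\Omega$-dependence enters only through the controlled quantities $D_\Omega(u_1)$ and $\sum g u_1$, both bounded uniformly in $\Omega$ by the a priori estimate $\sum_{x\in B(\Omega)}|u_1(x)|\leqslant C$ established in the proof of Lemma~\ref{lm4}.
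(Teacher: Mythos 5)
Your energy-decay half is correct, and it is essentially the paper's argument in different packaging: the paper multiplies $(3)$ by $u_k-u_{k-1}$, applies Green's identity, and uses concavity of $\varphi(x)=\frac{\lambda}{2}(e^x-1)^2-\frac{K}{2}x^2$ on $x\leqslant 0$, which is precisely your Taylor-remainder bound (the second derivative of $\frac{\lambda}{2}(e^t-1)^2$ is at most $2\lambda<K$ for $t\leqslant 0$); your direct computation of $F(u_k)-F(u_{k+1})$ is fine and even dispenses with the paper's extra Cauchy--Schwarz step for $D_\Omega(u_{k-1},u_k)$.

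The genuine gap is in the upper bound $F(u_1)\leqslant c_0$. The justification you finally commit to --- that $D_\Omega(u_1)$ and $\sum_{x\in\Omega}g(x)u_1(x)$ are ``bounded uniformly in $\Omega$ by the a priori estimate $\sum_{x\in B(\Omega)}|u_1(x)|\leqslant C$ from Lemma~\ref{lm4}'' --- is a non sequitur: that estimate only bounds $u_1$ at vertices adjacent to $\delta\Omega$ and gives no control on the Dirichlet energy, on $\|u_1\|_{l^2(\Omega)}$, or on $\sum g u_1$ (the $p_j$ lie deep inside $\Omega_0$, not near $\delta\Omega$). The paper instead tests the linear problem $(4)$ with $u_1$ (your Green's identity, but mind the sign: $D_\Omega(u_1)=-\sum u_1\Delta u_1=-\sum u_1(Ku_1+g)$), which gives
$$D_\Omega(u_1)+K\sum_{x\in\Omega}u_1(x)^2=-\sum_{x\in\Omega}g(x)u_1(x)\leqslant\frac{1}{2K}\|g\|^2_{l^2(\mathbb{Z}^n)}+\frac{K}{2}\sum_{x\in\Omega}u_1(x)^2,$$
hence $\|u_1\|^2_{l^2(\Omega)}\leqslant\|g\|^2_{l^2(\mathbb{Z}^n)}/K^2\leqslant C^2/K^2$ uniformly in $\Omega$. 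Moreover, even granting this, your sketch never controls the nonlinear term $\sum_{x\in\Omega}\frac{\lambda}{2}(e^{u_1}-1)^2$ uniformly in $\Omega$: you rightly note that the pointwise bound $\frac{\lambda}{2}$ is useless after summation, but the inequality that closes the argument, namely $|e^{u_1}-1|\leqslant|u_1|$ (valid since $u_1\leqslant 0$), so that this term is at most $\frac{\lambda}{2}\|u_1\|^2_{l^2(\Omega)}$, never appears; this is exactly how the paper reduces $F(u_1)$ to $c_1+c_2\|u_1\|^2_{l^2(\Omega)}$. Incidentally, the cleanest repair is already inside your own argument: your decay inequality applies verbatim at $k=0$, where $u_0=0$ and $F(u_0)=0$ (note $\psi(0)=\psi'(0)=0$ for $\psi(t)=\frac{\lambda}{2}(e^t-1)^2$), so $F(u_1)\leqslant F(u_0)=0$ and one may simply take $c_0=0$; your line ``$F(u_1)\leqslant F(0)+[F(u_1)-F(0)]$'' is a tautology as written, but becomes a proof once you observe that the bracket is nonpositive by the $k=0$ decay step.
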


\begin{proof}
Multiplying $(3)$ by $u_k-u_{k-1}$ and summing over $\Omega$, we obtain
\begin{equation}
\begin{split}
&\sum_{x\in\Omega}(\Delta-K)u_k(x)\left[u_k(x)-u_{k-1}(x)\right]\\
=&\sum_{x\in\Omega}[\lambda e^{u_{k-1}}(e^{u_{k-1}}-1)(u_k-u_{k-1})-
Ku_{k-1}(u_k-u_{k-1})+g(u_k-u_{k-1})](x).
\end{split}
\end{equation}
By Green's identities in Lemma~\ref{lm1},
\begin{equation*}
\begin{aligned}
\sum_{x\in\Omega}\Delta u_k(x)(u_k(x)-u_{k-1}(x))=-D_\Omega(u_k-u_{k-1},u_k)=-D_\Omega(u_k)+D_\Omega(u_{k-1},u_k).\\
\end{aligned}
\end{equation*}
Combining it with the equation $(6)$, we get
\begin{equation*}
\begin{aligned}
&D_\Omega(u_k)-D_\Omega(u_{k-1},u_k)+\sum_{x\in\Omega}K(u_k(x)-u_{k-1}(x))^2\\
=&-\sum_{x\in\Omega}[\lambda e^{u_{k-1}}(e^{u_{k-1}}-1)(u_k-u_{k-1})+g(u_k-u_{k-1})](x).
\end{aligned}
\end{equation*}

Consider the function
$$\varphi (x)=\frac{\lambda}{2}(e^x-1)^2-\frac{K}{2}x^2,$$
which is concave for  any $x\leqslant 0$. Hence
\begin{equation*}
\begin{aligned}
\frac{\varphi(u_{k-1})-\varphi(u_{k})}{u_{k-1}-u_k}\geqslant \varphi'(u_{k-1})=\lambda e^{u_{k-1}}(e^{u_{k-1}}-1)-Ku_{k-1}.
\end{aligned}
\end{equation*}
That is,
$$\frac{\lambda}{2}(e^{u_k}-1)^2\leqslant \frac{\lambda}{2}(e^{u_{k-1}}-1)^2+\frac{K}{2}(u_k-u_{k-1})^2+\lambda e^{u_{k-1}}(e^{u_{k-1}}-1)(u_k-u_{k-1}).$$
By the fact
\begin{equation*}
\begin{aligned}
&|D_\Omega(u_{k-1},u_k)|\leqslant \frac{1}{2}\sum_{\substack{x,y\in \Omega\\ x\sim y}}|\nabla_{xy}u_{k-1}\nabla_{xy}u_{k}|+\sum_{\substack{x\in \Omega,y\in \delta\Omega\\ x\sim y}}|\nabla_{xy}u_{k-1}\nabla_{xy}u_{k}|\\
\leqslant& \frac{1}{4}\sum_{\substack{x,y\in \Omega\\ x\sim y}}(|\nabla_{xy}u_{k-1}|^2+|\nabla_{xy}u_{k}|^2)+\frac{1}{2}\sum_{\substack{x\in \Omega,y\in \delta\Omega\\ x\sim y}}(|\nabla_{xy}u_{k-1}|^2+|\nabla_{xy}u_{k}|^2)\\
=& \frac{1}{2}D_\Omega(u_{k-1})+\frac{1}{2}D_\Omega(u_{k}),
\end{aligned}
\end{equation*}
we obtain that
$$F(u_k)\leqslant F(u_k)+\frac{K}{2}\|u_{k-1}-u_k\|^2_{l^2(\Omega)}\leqslant F(u_{k-1}).$$
Thus we only need to prove $F(u_1)\leqslant c_0$. Note that
\begin{equation*}
\begin{aligned}
D_\Omega(u_{1})&=\frac{1}{2}\sum_{\substack{x,y\in \Omega\\ x\sim y}}|\nabla_{xy}u_{1}|^2+\sum_{\substack{x\in \Omega,y\in \delta\Omega\\ x\sim y}}|\nabla_{xy}u_{1}|^2\\
&\leqslant\sum_{\substack{x,y\in \Omega\\ x\sim y}}(u_1(x)^2+u_1(y)^2)+2\sum_{\substack{x\in \Omega,y\in \delta\Omega\\ x\sim y}}(u_1(x)^2+u_1(y)^2)\\
&\leqslant 4n \|u_1\|_{l^2(\Omega)}^2,
\end{aligned}
\end{equation*}
and $|e^{u_1}-1|=1-e^{u_1}\leqslant -u_1$. Then we have the estimate
\begin{equation*}
\begin{aligned}
F(u_1)&\leqslant \frac{1}{2}\cdot4n\|u_1\|_{l^2(\Omega)}^2+\frac{\lambda}{2}\sum_{x\in\Omega} u_1(x)^2+\frac{1}{2}\sum_{x\in\Omega} [g(x)^2+u_1(x)^2]\\
&=c_1+c_2\|u_1\|_{l^2(\Omega)}^2,
\end{aligned}
\end{equation*}
where $c_1,c_2$ are constants that only depend on $n$, $\lambda$ and $C$. Multiplying $(4)$ by $u_1$ and summing over $\Omega$, we have
\begin{equation*}
\begin{aligned}
D_\Omega(u_1)+K\sum_{x\in\Omega} u_1(x)^2=-\sum_{x\in\Omega} g(x)u_1(x).
\end{aligned}
\end{equation*}
This yields
$$K\sum_{x\in\Omega} u_1(x)^2\leqslant \frac{1}{2K}\sum_{x\in\Omega}g(x)^2+\frac{K}{2}\sum_{x\in\Omega}u_1(x)^2.$$
Hence,
$$\sum_{x\in\Omega} u_1(x)^2\leqslant \frac{\|g\|_{l^2(\mathbb{Z}^n)}^2}{K^2},$$
which completes the proof.
\end{proof}

\

Our aim is a uniform control of the $l^2$-norm of $\{u_k\}$. By Lemma~\ref{lm6}, we can use the functional $F(u_k)$ to control the $l^2$-norm of $u_k$. In fact, we prove the following lemma, which states that the functional $F$ is coercive.

\begin{lemma}\label{lm7}
Let $v\in l^2(\overline{\Omega})$ and $v(x)=0$ for all $x\in \delta\Omega$. Then
$$\|v\|_{l^2(\Omega)}\leqslant C_2(F(v)+1),$$
where $C_2$ only depends on $n, C, \lambda$. In particular, let $\{u_k\}$ be the sequence defined by $(3)$. We have for any $k\geqslant1$,
$$\|u_k\|_{l^2(\Omega)}\leqslant C_2(F(u_k)+1)\leqslant C_0,$$
where $C_0$ only depends on $n, C, \lambda$.
\end{lemma}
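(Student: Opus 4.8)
The plan is to read the functional as $F(v) = D + P + G$ with $D := \tfrac12 D_\Omega(v)$, $P := \tfrac{\lambda}{2}\sum_{x\in\Omega}(e^{v(x)}-1)^2$ and $G := \sum_{x\in\Omega} g(x)v(x)$, where $D,P\geqslant 0$ but the coupling term $G$ has no sign. I would first bound $\|v\|_{l^2(\Omega)}$ by the good part $D+P$ alone, and only afterwards absorb $G$. As a preliminary, since $v=0$ on $\delta\Omega$, the null extension $\widetilde v$ satisfies $\|\widetilde v\|_{D^{1,2}(\mathbb{Z}^n)}^2 = 2D_\Omega(v)$ (by a direct edge count using $v|_{\delta\Omega}=0$), so the inequality of the Remark after Lemma~\ref{GN} applies to $\widetilde v$ and reads $\|v\|_{l^4(\Omega)}^2\leqslant (C_n')^2(2D_\Omega(v))^{1/2}\|v\|_{l^2(\Omega)}$.

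Next I split $\Omega$ into $\{|v|<1\}$ and $\{|v|\geqslant 1\}$. On the first set the elementary bound $t^2\leqslant c_0^{-1}(e^t-1)^2$ for $|t|\leqslant 1$ gives $\sum_{|v|<1}v^2\leqslant \tfrac{2}{\lambda c_0}P$. On the second set I use Cauchy--Schwarz, $\sum_{|v|\geqslant 1}v^2\leqslant |\{|v|\geqslant 1\}|^{1/2}\,\|v\|_{l^4(\Omega)}^2$; here the key point is that $(e^t-1)^2\geqslant c_1>0$ whenever $|t|\geqslant 1$, so the potential controls the cardinality, $|\{|v|\geqslant 1\}|\leqslant \tfrac{2}{\lambda c_1}P$. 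Feeding in the GNS bound for $\|v\|_{l^4(\Omega)}^2$ produces a quadratic inequality of the form $\|v\|_{l^2(\Omega)}^2\leqslant \tfrac{2}{\lambda c_0}P + \beta (DP)^{1/2}\|v\|_{l^2(\Omega)}$, which I solve to obtain the unconditional estimate $\|v\|_{l^2(\Omega)}\lesssim (DP)^{1/2}+P^{1/2}\lesssim (D+P)+1$, with implied constants depending only on $n,\lambda,C$.

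It remains to dominate $G$ by the good part. Using $|G|\leqslant C\|v\|_{l^\infty(\Omega)}\leqslant C\|v\|_{l^4(\Omega)}$ together with the GNS bound and the estimate just proved, I get $-G\lesssim D^{1/4}\|v\|_{l^2(\Omega)}^{1/2}\lesssim D^{1/4}((D+P)+1)^{1/2}$, and since the resulting exponent in $D+P$ is $3/4<1$, Young's inequality yields $-G\leqslant \epsilon(D+P)+C(\epsilon)$ for any $\epsilon>0$. Inserting this into $D+P = F-G\leqslant F+\epsilon(D+P)+C(\epsilon)$ and choosing $\epsilon$ small, I conclude $D+P\lesssim F(v)+1$, hence $\|v\|_{l^2(\Omega)}\leqslant C_2(F(v)+1)$ with $C_2=C_2(n,\lambda,C)$. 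The final assertion for the iterative sequence then follows by taking $v=u_k$ and invoking Lemma~\ref{lm6}, which gives $F(u_k)\leqslant c_0$ and therefore $\|u_k\|_{l^2(\Omega)}\leqslant C_2(c_0+1)=:C_0$.

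The main obstacle is precisely the coupling term $G=\sum_{x\in\Omega}g(x)v(x)$: it carries no favourable sign and can be as negative as $-C\|v\|_{l^\infty(\Omega)}$, so coercivity cannot come from the potential term alone and must be recovered through the gradient energy. The decisive mechanism is the interplay between the subcritical Gagliardo--Nirenberg--Sobolev exponent---available only for $n\geqslant 2$, exactly as in Lemma~\ref{GN}---and the fact that the potential bounds the cardinality of $\{|v|\geqslant 1\}$; together these make the absorption above possible. One should also keep in mind that the constant $C(\epsilon)$ produced by Young's inequality means the clean form $F(v)+1$ is to be read with a possibly larger additive constant, which is harmless for the application since $F(u_k)$ is bounded above.
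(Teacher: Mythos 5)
Your proof is correct, and it reaches the key coercivity estimate by a genuinely different mechanism than the paper. The paper first absorbs the vortex term via H\"older, $|\sum_x g(x)v(x)| \leqslant \|g\|_{l^{4/3}}\|v\|_{l^4}$, followed by GNS and Young's inequality (its inequality (8)), and then proves coercivity through the identity $|v| = \frac{|v|}{1+|v|}(1+|v|)$: Cauchy--Schwarz plus the GNS-derived bound $\sum_{x\in\Omega} v^4 \leqslant C_3 D_\Omega(v)\sum_{x\in\Omega} v^2$ give $\|v\|_{l^2(\Omega)} \leqslant C_6\bigl[1 + \sum_{x\in\Omega}\bigl(\tfrac{|v|}{1+|v|}\bigr)^2 + D_\Omega(v)\bigr]$ (its inequality (9)), and the two estimates are combined with a small-$\epsilon$ choice. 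You instead split $\Omega$ into the level sets $\{|v|<1\}$ and $\{|v|\geqslant1\}$: on the first the potential $(e^v-1)^2$ dominates $v^2$ pointwise, on the second it dominates the cardinality, which together with Cauchy--Schwarz and GNS yields a quadratic inequality and hence the unconditional bound $\|v\|_{l^2(\Omega)}\lesssim D+P+1$ (in your notation); only then do you absorb the coupling term, using $|G|\leqslant C\|v\|_{l^\infty(\Omega)}\leqslant C\|v\|_{l^4(\Omega)}$ --- exploiting that $g$ is a finite sum of Dirac masses --- instead of H\"older. Both routes rest on the same two pillars, namely Lemma~\ref{GN} (hence the restriction $n\geqslant2$) and the coercivity of $(e^v-1)^2$, but your level-set argument replaces the paper's $\frac{|v|}{1+|v|}$ trick with a more elementary counting statement, at the price of solving a quadratic inequality and invoking GNS a second time when treating $G$. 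Your closing caveat is well taken: what you actually prove is $\|v\|_{l^2(\Omega)}\leqslant C(F(v)+A)$ with $A=A(n,\lambda,C)$ rather than literally $F(v)+1$; note, however, that the paper's own combination of (8) and (9) produces exactly the same kind of additive constant, so this is a shared and harmless rewriting --- in both proofs the application to $u_k$ only uses the upper bound $F(u_k)\leqslant c_0$ from Lemma~\ref{lm6}, which yields the uniform constant $C_0$ either way.
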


\begin{proof}
For any function $v\in l^2(\overline{\Omega})$ with $v(x)=0$ for all $x\in \delta\Omega$. Let $\widetilde{v}$ be the null extension to $\mathbb{Z}^n$ of $v$ on $\Omega$.
Hence $\widetilde{v}\in l^2(\mathbb{Z}^n)$. By Lemma~\ref{GN}, we have
$$\|\widetilde{v}\|^4_{l^4(\mathbb{Z}^n)}\leqslant C_n' \|\widetilde{v}\|^2_{D^{1,2}(\mathbb{Z}^n)}\|\widetilde{v}\|^2_{l^2(\mathbb{Z}^n)}.$$
Note that
$$\|\widetilde{v}\|^4_{l^4(\mathbb{Z}^n)}=\sum_{x\in\Omega} v(x)^4,$$
$$\|\widetilde{v}\|^2_{l^2(\mathbb{Z}^n)}=\sum_{x\in\Omega}v(x)^2,$$
and
$$\|\widetilde{v}\|_{D^{1,2}(\mathbb{Z}^n)}\leqslant (2D_\Omega(v))^\frac{1}{2}.$$
This yields that
\begin{equation}
\begin{aligned}
\sum_{x\in\Omega} v(x)^4\leqslant C_3 D_\Omega(v)\sum_{x\in\Omega}v(x)^2,
\end{aligned}
\end{equation}
where $C_3=2C_n'$. Since $e^v-1\geqslant v$ and $1-e^{-v}\geqslant \frac{v}{1+v}$ for $v\geqslant0$,
$$|e^v-1|^2\geqslant \left(\frac{|v|}{1+|v|}\right)^2.$$
By $(7)$ we obtain
\begin{equation}
\begin{split}
F(v)&=\frac{1}{2}D_\Omega(v)+\sum_{x\in\Omega}\left[\frac{\lambda}{2}(e^{v(x)}-1)^2+g(x)u(x)\right]\\
&\geqslant \frac{1}{2}D_\Omega(v)+\frac{\lambda}{2}\sum_{x\in\Omega}\left(\frac{|v(x)|}{1+|v(x)|}\right)^2-\|g\|_{l^{\frac{4}{3}}(\mathbb{Z}^n)}\|v\|_{l^4(\Omega)}\\
&\geqslant \frac{1}{2}D_\Omega(v)+\frac{\lambda}{2}\sum_{x\in\Omega}\left(\frac{|v(x)|}{1+|v(x)|}\right)^2- C_4(D_\Omega(v))^\frac{1}{4}\left(\sum_{x\in\Omega}v(x)^2\right)^{\frac{1}{4}}\\
&\geqslant \frac{1}{2}D_\Omega(v)+\frac{\lambda}{2}\sum_{x\in\Omega}\left(\frac{|v(x)|}{1+|v(x)|}\right)^2- \epsilon \|v\|_{l^2(\Omega)}- \frac{C_4}{\epsilon}(D_\Omega(v))^\frac{1}{2}\\
&\geqslant \frac{1}{2}D_\Omega(v)+\frac{\lambda}{2}\sum_{x\in\Omega}\left(\frac{|v(x)|}{1+|v(x)|}\right)^2- \epsilon \|v\|_{l^2(\Omega)}- \frac{1}{4}D_\Omega(v)-C_4\\
&=\frac{1}{4}D_\Omega(v)+\frac{\lambda}{2}\sum_{x\in\Omega}\left(\frac{|v(x)|}{1+|v(x)|}\right)^2- \epsilon \|v\|_{l^2(\Omega)}-C_4,
\end{split}
\end{equation}
where $\epsilon>0$ is a sufficient small constant which will be chosen below, and $C_4$ is a uniform constant only depending on $\epsilon, C, C_n'$ which may change its value from line to line.

By the inequality $(7)$, we have the following estimate
\begin{align*}
&\left(\sum_{x\in\Omega}v(x)^2\right)^2=\left[ \sum_{x\in\Omega}\frac{|v(x)|}{1+|v(x)|}(1+|v(x)|)|v(x)| \right]^2 \\
\leqslant &\sum_{x\in\Omega}\left(\frac{|v(x)|}{1+|v(x)|}\right)^2 \sum_{x\in\Omega}\left(1+|v(x)|\right)^2v(x)^2\\
\leqslant &2\sum_{x\in\Omega}\left(\frac{|v(x)|}{1+|v(x)|}\right)^2 \sum_{x\in\Omega}\left(v(x)^2+v(x)^4\right)\\
\leqslant &2\sum_{x\in\Omega}\left(\frac{|v(x)|}{1+|v(x)|}\right)^2\sum_{x\in\Omega}v(x)^2+2C_3\sum_{x\in\Omega}\left(\frac{|v(x)|}{1+|v(x)|}\right)^2D_\Omega(v)\sum_{x\in\Omega}v(x)^2\\
\leqslant &\frac{1}{2}\left(\sum_{x\in\Omega}v(x)^2\right)^2+C_5\left[\left(\sum_{x\in\Omega}\left(\frac{|v(x)|}{1+|v(x)|}\right)^2\right)^2
+\left(\sum_{x\in\Omega}\left(\frac{|v(x)|}{1+|v(x)|}\right)^2\right)^2D_\Omega(v)^2\right]\\
\leqslant &\frac{1}{2}\left(\sum_{x\in\Omega}v(x)^2\right)^2+C_5\left[1+\left(\sum_{x\in\Omega}\left(\frac{|v(x)|}{1+|v(x)|}\right)^2\right)^4+D_\Omega(v)^4\right],\\
\end{align*}
where $C_5,C_6$ are uniform constant only depending on $C_n'$. This yields that
\begin{equation}
\begin{aligned}
\|v\|_{l^2(\Omega)} \leqslant C_6\left[1+\sum_{x\in\Omega}\left(\frac{|v(x)|}{1+|v(x)|}\right)^2+D_\Omega(v)\right].
\end{aligned}
\end{equation}
We choose $\epsilon = \frac{\min\{\frac{1}{8},\frac{\lambda}{4}\}}{C_6}$, and by combining $(8)$ with $(9)$, we obtain
$$\|v\|_{l^2(\Omega)}\leqslant C_2(F(v)+1).$$
By Lemma~\ref{lm6}, we have
$$\|u_k\|_{l^2(\Omega)}\leqslant C_2(F(u_k)+1)\leqslant C_0,$$
where $C_2,C_0$ only depend on $n, C, \lambda$.
\end{proof}

\

\begin{proof}[Proof of Lemma~\ref{norm}]
By Lemma~\ref{lm7} and Lemma~\ref{lm3}, we obtain
\begin{equation*}
\begin{aligned}
u_k\rightarrow\ u_\Omega \ \ &\text{in}\ l^2(\Omega),
\end{aligned}
\end{equation*}
and
$$\|u_\Omega\|_{l^2(\Omega)}\leqslant C_0.$$
Since $\Delta$ is a local operator, by the pointwise convergence the function $u_\Omega\in l^2(\Omega)$ is the solution to the equation
\begin{equation*}
\left\{
\begin{aligned}
& \Delta u=\lambda e^u(e^u-1)+g\ \ \text{on} \ \Omega,\\
& u(x) =0\ \ \text{on} \ \delta\Omega.\\
\end{aligned}
\right.
\end{equation*}
This finishes the main proof of Lemma~\ref{norm}. For the rest, it remains to prove that this solution is maximal, which we argue the same as in the proof A.

\end{proof}

\

Let $\Omega_i$ be finite and connected subsets, satisfying
$$\Omega_0\subset \Omega_1\subset\ldots\subset \Omega_k\subset\ldots,\ \ \bigcup_{i=1}^\infty\Omega_i=\mathbb{Z}^n,$$
and we write $u^i =u_{\Omega_i}$. Finally we use these lemmas to prove Theorem~\ref{thm:main2}.

\begin{proof}[Proof of Theorem~\ref{thm:main2}]
As in the proof A, for any integer $1\leqslant j\leqslant k$, one easily sees that
$$u^k\leqslant u^j \ \ \text{on}\ \overline{\Omega_j}.$$
Let $\widetilde{u^k}$ be the null extension to $\mathbb{Z}^n$ of $u_k$.
Then
$$0\geqslant \widetilde{u^1}\geqslant\widetilde{u^2}\geqslant\ldots\geqslant\widetilde{u^k}\geqslant\ldots$$
on $\mathbb{Z}^n$.
Noting that $\|\widetilde{u^k}\|_{l^2(\mathbb{Z}^n)}\leqslant C_0$ for any $k\geqslant1$, we have the pointwise convergence
\begin{equation*}
\begin{aligned}
\widetilde{u^k}(x)\rightarrow\ u(x), \ \forall x\in\mathbb{Z}^n ,
\end{aligned}
\end{equation*}
and $u\in l^2(\mathbb{Z}^n)$. Hence, $u$ satisfies the equations
\begin{equation*}
\left\{
\begin{aligned}
& \Delta u=\lambda e^u(e^u-1)+4\pi\sum_{j=1}^M\delta_{p_j}\ \ \text{on} \ \mathbb{Z}^n,\\
& \lim_{d(x)\rightarrow+\infty}u(x) =0,\\
\end{aligned}
\right.
\end{equation*}
which is a topological solution. Analogous to the proof A, one can show that this solution is maximal and satisfies the decay estimate. This implies that $u\in l^p(\mathbb{Z}^n)$ for any $1\leqslant p\leqslant\infty$, and we finish the proof B.

\end{proof}

\

\section{Existence theorems of the Abelian Higgs equation}\label{sec:3}

Note that the topological solution to the Chern-Simons model, obtained in Theorem~\ref{thm:main2}, serves as a subsolution of the Abelian Higgs equation. In this section we will prove the existence of topological solutions to the Abelian Higgs equation $(2)$ on $\mathbb{Z}^n$ for $n\geqslant2$ using the sub-supersolution approach. We prove the existence of topological solutions to $(2)$ by a monotone iteration method.


\begin{definition}
We call a function $\omega$ supersolution (resp. subsolution) of $(2)$, if
$$\Delta \omega\leqslant (resp. \geqslant)\ \lambda (e^\omega-1)+g\ \ \text{on} \ \mathbb{Z}^n.$$
\end{definition}

\

\begin{lemma}\label{lm8}
The function $\omega_1=0$ is a supersolution of $(2)$, and the function $\omega_2=u$ given by Theorem~\ref{thm:main2} is a subsolution of $(2)$.
\end{lemma}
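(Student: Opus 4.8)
The statement to prove is Lemma~\ref{lm8}: that $\omega_1 = 0$ is a supersolution and $\omega_2 = u$ (the Chern-Simons solution from Theorem~\ref{thm:main2}) is a subsolution of the Abelian Higgs equation $(2)$. The plan is to verify each claim directly against the definition just given, namely that a supersolution satisfies $\Delta\omega \leqslant \lambda(e^\omega - 1) + g$ and a subsolution satisfies the reverse inequality, both on all of $\mathbb{Z}^n$. The key observation driving the whole argument is that the Chern-Simons nonlinearity $\lambda e^u(e^u - 1)$ and the Abelian Higgs nonlinearity $\lambda(e^u - 1)$ can be compared pointwise using the sign information $u \leqslant 0$ coming from Theorem~\ref{thm:main2}.

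**The supersolution check.** For $\omega_1 = 0$ I would compute $\Delta\omega_1 = 0$ directly, since the Laplacian annihilates constants. The supersolution inequality then requires $0 \leqslant \lambda(e^0 - 1) + g = g$ on $\mathbb{Z}^n$. Because $g = 4\pi\sum_{j=1}^M n_j \delta_{p_j}$ with all $n_j$ positive integers, $g \geqslant 0$ everywhere, so the inequality holds. This step is essentially immediate.

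**The subsolution check.** This is the substantive part. Here $\omega_2 = u$ solves the Chern-Simons equation $\Delta u = \lambda e^u(e^u - 1) + g$ on $\mathbb{Z}^n$ exactly. I need to show $\Delta u \geqslant \lambda(e^u - 1) + g$, which after substituting the equation reduces to showing
$$\lambda e^u(e^u - 1) \geqslant \lambda(e^u - 1) \quad \text{on } \mathbb{Z}^n.$$
Since $u \leqslant 0$ by the maximality/construction in Theorem~\ref{thm:main2}, I have $e^u - 1 \leqslant 0$ and also $e^u \leqslant 1$, hence $e^u - 1 \leqslant 0$ multiplied by the factor $e^u \in (0,1]$ yields $\lambda e^u(e^u - 1) \geqslant \lambda(e^u-1)$: multiplying a nonpositive quantity $(e^u - 1)$ by a factor $e^u \leqslant 1$ makes it larger (closer to zero). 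This confirms the subsolution inequality.

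**Where the difficulty lies.** There is no serious obstacle here; the lemma is a bookkeeping step preparing for the monotone iteration. The only point requiring care is that the sign comparison $\lambda e^u(e^u-1) \geqslant \lambda(e^u-1)$ depends critically on $u \leqslant 0$ holding \emph{everywhere} on $\mathbb{Z}^n$, which is guaranteed by Theorem~\ref{thm:main2} (the solution is constructed as a decreasing limit of nonpositive functions $\widetilde{u^i} \leqslant 0$). I would state this sign property explicitly as the hypothesis being invoked, so that the factor-of-$e^u$ comparison is justified pointwise at every vertex, including the vortex points $p_j$ where $g > 0$.
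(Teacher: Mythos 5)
Your proof is correct and follows essentially the same route as the paper: the supersolution part uses $g\geqslant 0$, and the subsolution part substitutes the Chern--Simons equation and uses $u\leqslant 0$ (so that $e^{u}\leqslant 1$ makes $e^{u}(e^{u}-1)\geqslant e^{u}-1$ pointwise). Your explicit remark that $u\leqslant 0$ everywhere is the hypothesis being invoked is exactly the point the paper's terser proof also rests on, so there is nothing to add.
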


\begin{proof}
Since $g\geqslant0$, we have
$$\Delta \omega_1=0\leqslant \lambda (e^{\omega_1}-1)+g.$$
Noting that $\omega_2\leqslant 0$, we have
$$\Delta \omega_2 = \lambda e^{\omega_2}(e^{\omega_2}-1)+g\geqslant\lambda (e^{\omega_2}-1)+g.$$
\end{proof}

Similar to Section~\ref{sec:2}, we define an iterative sequence as follows. For $K>\lambda$, let $u'_0=0$ and consider the following equations, $k\geqslant1$,
\begin{equation}
\left\{
\begin{aligned}
& (\Delta-K) u'_k=\lambda (e^{u'_{k-1}}-1)+g-Ku'_{k-1}\ \ \text{on} \ \mathbb{Z}^n,\\
& \lim_{d(x)\rightarrow+\infty}u'_k(x) =0.\\
\end{aligned}
\right.
\end{equation}
We have the following lemma.

\begin{lemma}\label{lm9}
Let $\{u'_k\}$ be the sequence defined by $(10)$. Then for each $k$, $u'_k$ is uniquely defined and
$$\omega_1=0=u'_0\geqslant u'_1\geqslant u'_2\geqslant\ldots\geqslant \omega_2.$$
\end{lemma}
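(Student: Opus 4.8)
The plan is to run the monotone iteration exactly as in Lemma~\ref{lm3} and Lemma~\ref{lm5}, but with the finite-graph maximum principle Lemma~\ref{lm2} replaced by its $l^2$ counterpart on the whole lattice, Corollary~\ref{co1}. The first task is solvability of the linear problems $(10)$. Since $\Delta$ is a bounded self-adjoint operator on $l^2(\mathbb{Z}^n)$ with $-\Delta\geqslant0$ (Lemma~\ref{lm1}(a)), the operator $K-\Delta\geqslant K>0$ is invertible with bounded inverse, and hence so is $\Delta-K$. Thus, as long as the source term
$$h_k:=\lambda(e^{u'_{k-1}}-1)+g-Ku'_{k-1}$$
lies in $l^2(\mathbb{Z}^n)$, the function $u'_k=(\Delta-K)^{-1}h_k$ is the unique $l^2$ solution of $(10)$, and being in $l^2(\mathbb{Z}^n)$ it automatically satisfies $\lim_{d(x)\to+\infty}u'_k(x)=0$. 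I would therefore carry $u'_{k-1}\in l^2(\mathbb{Z}^n)$ with $\omega_2\leqslant u'_{k-1}\leqslant0$ as part of the induction hypothesis; then $g$ has finite support and the maps $t\mapsto e^t-1$ and $t\mapsto t$ are Lipschitz on the bounded range $[\inf_{\mathbb{Z}^n}\omega_2,\,0]$, so $h_k\in l^2(\mathbb{Z}^n)$ and $u'_k$ is well defined.

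For the base of the induction, note $u'_0=0\in l^2(\mathbb{Z}^n)$ and $\omega_2=u\leqslant0$ by Theorem~\ref{thm:main2}, so $u'_0\geqslant\omega_2$. Solving $(\Delta-K)u'_1=g$ with $g\geqslant0$, Corollary~\ref{co1} applied to $u'_1\in l^2(\mathbb{Z}^n)$ gives $u'_1\leqslant0=u'_0$.

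For the monotonicity step, assuming $\omega_2\leqslant u'_i\leqslant u'_{i-1}\leqslant0$, I would subtract the equations for $u'_{i+1}$ and $u'_i$ and use the mean value theorem to write
$$(\Delta-K)(u'_{i+1}-u'_i)=\lambda(e^{u'_i}-e^{u'_{i-1}})-K(u'_i-u'_{i-1})=(\lambda e^{\theta}-K)(u'_i-u'_{i-1}),$$
where $u'_i\leqslant\theta\leqslant u'_{i-1}\leqslant0$, so $e^\theta\leqslant1$ and $\lambda e^\theta-K<0$ since $K>\lambda$. As $u'_i-u'_{i-1}\leqslant0$, the right-hand side is $\geqslant0$, and since $u'_{i+1}-u'_i\in l^2(\mathbb{Z}^n)$ tends to $0$ at infinity, Corollary~\ref{co1} yields $u'_{i+1}\leqslant u'_i$. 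For the lower bound, using that $\omega_2$ is a subsolution (Lemma~\ref{lm8}), a parallel computation gives
$$(\Delta-K)(\omega_2-u'_{i+1})\geqslant\lambda(e^{\omega_2}-e^{u'_i})-K(\omega_2-u'_i)=(\lambda e^{\eta}-K)(\omega_2-u'_i)\geqslant0,$$
with $\omega_2\leqslant\eta\leqslant u'_i\leqslant0$ and $\omega_2-u'_i\leqslant0$; Corollary~\ref{co1} then gives $\omega_2\leqslant u'_{i+1}$. This closes the induction and produces the full chain $0=u'_0\geqslant u'_1\geqslant\ldots\geqslant\omega_2$.

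The main obstacle, and the only essential difference from Lemmas~\ref{lm3} and \ref{lm5}, is that the comparison principle now lives on the infinite lattice, where Corollary~\ref{co1} requires the tested function to belong to $l^2(\mathbb{Z}^n)$. The work therefore goes into maintaining $u'_k\in l^2(\mathbb{Z}^n)$ at every stage, so that all the differences $u'_{i+1}-u'_i$ and $\omega_2-u'_{i+1}$ are admissible in Corollary~\ref{co1}; this is exactly what the squeezing $\omega_2\leqslant u'_k\leqslant0$ together with $\omega_2=u\in l^2(\mathbb{Z}^n)$ secures, while invertibility of $\Delta-K$ on $l^2(\mathbb{Z}^n)$ (guaranteed by the mass term $K>0$) provides existence, uniqueness, and the automatic decay at infinity.
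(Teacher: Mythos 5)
Your proposal is correct and follows essentially the same route as the paper: monotone iteration with the mean value theorem to get $(\Delta-K)(u'_{i+1}-u'_i)\geqslant 0$ and $(\Delta-K)(\omega_2-u'_{i+1})\geqslant 0$, then the $l^2$ comparison principle of Corollary~\ref{co1} (with the subsolution property of $\omega_2$ from Lemma~\ref{lm8}) to squeeze $\omega_2\leqslant u'_{i+1}\leqslant u'_i\leqslant 0$. In fact you make explicit two points the paper treats as clear — invertibility of $\Delta-K$ on $l^2(\mathbb{Z}^n)$ via $-\Delta\geqslant 0$, and the fact that $l^2$ membership forces the decay condition and keeps each difference admissible in Corollary~\ref{co1} — so the write-up is, if anything, more complete than the original.
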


\begin{proof}
It is clear that $u'_1$ is unique and $u'_1\in l^2(\mathbb{Z}^n)$. Since
\begin{equation*}
\begin{aligned}
(\Delta-K)(\omega_2-u'_1)\geqslant \lambda(e^{\omega_2}-1)-K\omega_2\geqslant (\lambda-K)\omega_2\geqslant0,
\end{aligned}
\end{equation*}
with the boundary conditions, we prove that $u'_1\geqslant \omega_2$ by Corollary~\ref{co1}.

Suppose that
$$0=u'_0\geqslant u'_1\geqslant u'_2\geqslant\ldots\geqslant u'_i\geqslant \omega_2,$$
and we have the existence and uniqueness of $u'_{i+1}\in l^2(\mathbb{Z}^n)$. By calculation, we obtain
\begin{equation*}
\begin{aligned}
(\Delta-K)(u'_{i+1}-u'_i)&=\lambda(e^{u'_i}-e^{u'_{i-1}})-K(u'_i-u'_{i-1})\\
&\geqslant\lambda e^{\eta_1}(u'_i-u'_{i-1})-K(u'_i-u'_{i-1})\\
&\geqslant K(e^{\eta_1}-1)(u'_i-u'_{i-1})\geqslant 0,
\end{aligned}
\end{equation*}
and
\begin{equation*}
\begin{aligned}
(\Delta-K)(\omega_2-u'_{i+1})&\geqslant\lambda(e^{\omega_2}-e^{u'_i})-K(\omega_2-u'_i)\\
&\geqslant(\lambda e^{\eta_2}-K)(\omega_2-u'_i)\geqslant0\\
\end{aligned}
\end{equation*}
where the functions $\eta_1,\eta_2$ satisfy
$$u'_i\leqslant\eta_1\leqslant u'_{i-1}\leqslant0, \ \omega_2\leqslant\eta_2\leqslant u'_i\leqslant0.$$
These yield that
$$\omega_2\leqslant u'_{i+1}\leqslant u'_i.$$

\end{proof}

Finally, we give a sketch of the proof of Theorem~\ref{thm:main3}.

\begin{proof}[Proof of Theorem~\ref{thm:main3}]
By Lemma~\ref{lm9}, the monotone sequence $\{u'_k\}$ is bounded in $l^2(\mathbb{Z}^n)$. Hence, we get the pointwise convergence
$$u'_k(x)\rightarrow u'(x),\ \forall x\in\mathbb{Z}^n,$$
and we obtain that $u'\in l^2(\mathbb{Z}^n)$ and $u'$ is a topological solution to $(2)$. In addition, if there exists another topological solution $f$, then
$$\Delta(u'-f)=\lambda(e^{u'}-e^f).$$
Hence, there exists a function $f'$, satisfying $\min\{u', f\}\leqslant f'\leqslant\max\{u', f\}$, such that
$$(\Delta - \lambda e^{f'})(u'-f)=0.$$
By the maximum principle, we obtain that this solution is unique.

Furthermore, since $0\geqslant u'\geqslant \omega_2=u$ and
$$u= O(e^{-m(1-\epsilon) d(x)}),$$
we have
$$u'= O(e^{-m(1-\epsilon) d(x)}),$$
and $u'\in l^p(\mathbb{Z}^n)$ for any $1\leqslant p\leqslant\infty$.
\end{proof}
\bigskip
\bigskip

\bigskip

\bibliographystyle{alpha}
\bibliography{ckwx}

\end{document}